\documentclass[a4paper,UKenglish,cleveref, autoref, thm-restate]{lipics-v2021}
\usepackage{algorithmicx}
\usepackage{amssymb}
\usepackage{algorithm}
\usepackage[noend]{algpseudocode}
\usepackage{tikz}
\usetikzlibrary{arrows.meta, positioning, quotes}
\title{Computational Complexity of Edge Coverage Problem for Constrained Control Flow Graphs} %TODO Please add

\author{Jakub Ruszil}{Jagiellonian University, Faculty of Mathematics and Computer Science, Cracow, Poland }{jakub.ruszil@uj.edu.pl}
{https://orcid.org/0000-0000-0000-0000}{}
\author{Artur Polański}{Jagiellonian University, Faculty of Mathematics and Computer Science, Cracow, Poland }{artur.polanski@uj.edu.pl}
{https://orcid.org/0000-0000-0000-0000}{}
\author{Adam Roman}{Jagiellonian University, Faculty of Mathematics and Computer Science, Cracow, Poland }{adam.roman@uj.edu.pl}
{https://orcid.org/0000-0000-0000-0000}{}
\author{Jakub Zelek}{Jagiellonian University, Faculty of Mathematics and Computer Science, Cracow, Poland }{jakub.zelek@doctoral.uj.edu.pl}
{https://orcid.org/0000-0000-0000-0000}{}
\authorrunning{xxx} %TODO mandatory. First: Use abbreviated first/middle names. Second (only in severe cases): Use first author plus 'et al.'

\Copyright{xxx} %TODO mandatory, please use full first names. LIPIcs license is "CC-BY";  http://creativecommons.org/licenses/by/3.0/

\ccsdesc[500]{Software and its engineering~Software testing and debugging}
\ccsdesc[300]{Theory of computation~Fixed parameter tractability}
\ccsdesc[500]{Mathematics of computing~Graph algorithms}

\keywords{constraint path-based testing, edge coverage, test case design, white-box testing} %TODO mandatory; please add comma-separated list of keywords

\category{} %optional, e.g. invited paper

\relatedversion{} %optional, e.g. full version hosted on arXiv, HAL, or other respository/website
\EventEditors{John Q. Open and Joan R. Access}
\EventNoEds{2}
\EventLongTitle{42nd Conference on Very Important Topics (CVIT 2016)}
\EventShortTitle{CVIT 2016}
\EventAcronym{CVIT}
\EventYear{2016}
\EventDate{December 24--27, 2016}
\EventLocation{Little Whinging, United Kingdom}
\EventLogo{}
\SeriesVolume{42}
\ArticleNo{23}
\begin{document}

\maketitle

%TODO mandatory: add short abstract of the document
\begin{abstract}
The article studies edge coverage for control flow graphs extended with explicit constraints. Achieving a given level of white-box coverage for a given code is a classic problem in software testing. We focus on designing test sets that achieve edge coverage \textit{while respecting additional constraints} between vertices. The paper analyzes how such constraints affect both the feasibility and computational complexity of edge coverage.

The motivation for taking constraints into account stems from limitations of plain control flow graphs, known as the \textit{semantic gap}. Standard graphs admit paths that represent infeasible executions, violate required execution patterns, or exceed realistic execution costs. Without constraints, edge coverage may require test paths that no execution satisfies or that violate domain rules. Constraints model semantic relations, mandatory execution patterns, and resource limits directly at the graph level. This leads to test sets that better reflect real program behavior and testing goals.

The paper discusses five types of constraints. POSITIVE constraints require at least one test path where a given vertex precedes another. NEGATIVE constraints forbid any such test path. ONCE constraints require exactly one test path with a single occurrence of one vertex before another. MAX ONCE constraints allow such precedence in at most one test path. ALWAYS constraints require every test path containing a given vertex to also contain another vertex later on the same path. Each type models a different test requirement, such as mandatory flows, semantic exclusions, or execution cost limits.

We investigate the computational complexity of finding a test set that achieves edge coverage and respects a given set of constraints. For POSITIVE constraints, the existence of an edge covering test set is decidable in polynomial time by extending standard edge coverage constructions with additional paths for each constraint. For NEGATIVE, MAX ONCE, ONCE, and ALWAYS constraints, the decision problem is NP-complete. The proofs rely on polynomial reductions from variants of SAT. The NP-completeness results hold even for restricted graph classes, including acyclic graphs, for all these four constraints.

Finally, we study the fixed-parameter tractability of the NEGATIVE constraint. Although the general problem is NP-complete, the paper presents an FPT algorithm with respect to the number of constraints. 
\end{abstract}

\newpage
\section{Introduction}

% rola technik bialoskrzynkowych i ich zalety; modelowanie programow przez grafy skierowane

White-box testing leverages knowledge of a program’s internal structure to design and evaluate test cases. Unlike black-box methods that only validate observable behavior, white-box techniques aim to directly exercise the code’s control and data flows, thereby enabling early detection of latent defects and vulnerabilities. Central to this approach are coverage criteria -- quantifiable benchmarks which measure the extent to which test sets exercise the program’s control-flow graph (CFG), helping testers identify untested areas and improve test adequacy.

% kryteria pokrycia w literaturze i uzywane w praktyce (rola edge coverage)
There are several well-known white-box coverage criteria, such as statement, branch, MC/DC, and N-switch coverage \cite{rechtberger2022a}. Path testing \cite{Sun2026} is another approach that defines coverage items as paths through the CFG. It shifts the focus of testing from isolated program elements to meaningful execution sequences. Real failures often arise not from a single line of code, but from the interaction of decisions over time -- exactly what paths capture. Many approaches for designing test sets that satisfy path-based coverage from CFGs include simple path coverage \cite{ammann2016}, basis path coverage \cite{Ghiduk2014, yan2008, Sun2024}, NPATH coverage \cite{ntafos1979}, or prime path coverage \cite{ammann2016, li2009}. 

% uzasadnienie dlaczego constrainty sa potrzebne w modelowaniu
However, in practice, modeling programs solely as CFGs provides an overly permissive foundation for path-based coverage criteria, as it admits many test paths that are neither semantically feasible nor economically meaningful to explore. First, CFGs suffer from a so-called \textit{semantic gap}: syntactic reachability over-approximates true program behavior, leading to the inclusion of infeasible control-flow paths (i.e., paths that cannot be exercised under any concrete execution \cite{yan2008, Gotlieb2010, Jaffar2010}). Second, CFGs fail to encode obligatory execution patterns that arise from domain-specific control flows or mandated use-case scenarios, where certain sequences of nodes must occur together even though this relationship is not imposed directly by the CFG \cite{He2024, Gong2024}. Third, CFGs ignore execution-cost constraints \cite{Nikolik2012, Wu2016, Rabatul2023, Polanski2025}, meaning that they may include paths whose feasibility is limited not by semantics but by prohibitive runtime or resource consumption, and which therefore can be traversed only a bounded number of times in practice. 

% wniosek: zaproponowano constraint path-based testing; przeglad literatury - prace Buresa i inne dot. pokrycia, path-based testing, etc.
As a result, path-based coverage criteria grounded purely in CFG structure risk both overestimate achievable coverage and misallocate testing effort toward paths that are either impossible, irrelevant, or economically unjustifiable. Therefore, a \textit{constraint path-based testing} approach was proposed \cite{Klima2024}. In \cite{Klima2025a}, four types of constraints were defined to model code-related semantic aspects in a CFG. In \cite{Klima2025c}, an algorithm was proposed for designing test sets achieving edge coverage (EC) respecting the defined constraints.

Let us consider a practical, motivational example involving constrained path-based testing of the Enterprise Procurement and Vendor Management business process shown in Fig. \ref{business-process}.

\begin{figure}[!ht]
\centering
\includegraphics[scale=0.5]{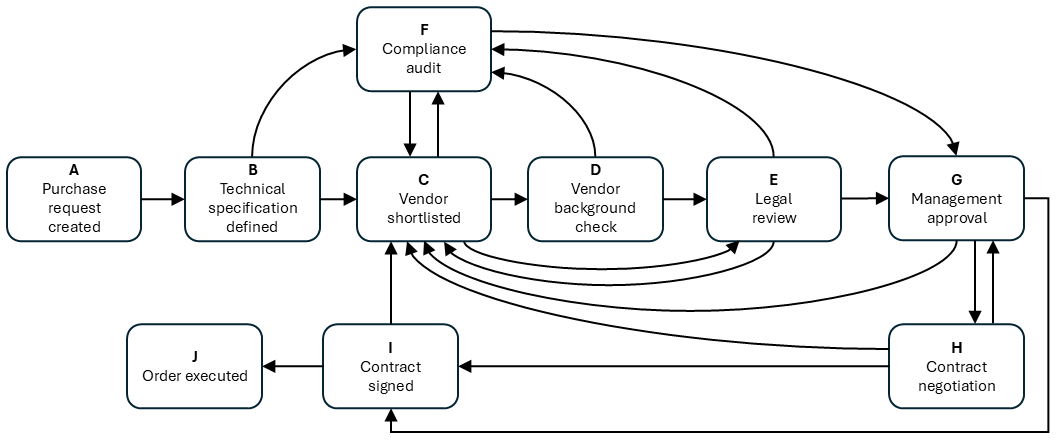}
\caption{Enterprise procurement and vendor management business process model}
\label{business-process}
\end{figure}

The graph is intentionally permissive: loops are allowed; after E, F, or G, the process may return to C (vendor changed); after H, negotiation may fail and return to C; audits and reviews can occur multiple times structurally; approval does not terminate the process; signing does not prevent further checks structurally.

However, for practical reasons, we want to introduce certain restrictions on the form of test paths in this model, as shown below. These five restrictions present five different constraint types defined in Section \ref{sec:problem}:

\begin{itemize}
    \item ($C_+$) After B, there should be at least one test that exercises F: some specifications (B) trigger regulatory constraints, so auditing (F) must be validated at least once.
     \item ($C_-$) After I, exercising F is forbidden: once a contract is signed (I), a compliance audit (F) must never occur afterward in any test case. Compliance audits are pre-contract controls, and post-signature audits would invalidate the contract.
     \item ($C_1$) Exactly one test case in the entire test suite may contain a path where F is followed by H: Compliance audit (F) together with negotiation (H) is rare, very expensive to simulate, and often involves external legal/compliance tools. Testing it zero times means we miss a critical scenario, but testing it more than once is a wasted cost.
     \item ($C_{\leq 1}$) Background check (D) followed by legal review (E) may occur in at most one test case: background with legal review is a high-cost combination, so repeating it gives diminishing returns. However, other paths (e.g., D-F or C-E directly) provide reasonable and frequent business flows, so in case we are short on time, we may skip a path D-E in our tests.
     \item ($C_=$) After H, G must follows: in any test case, if negotiation (H) occurs, approval (G) must occur later in the same test case. Negotiated contracts must be explicitly approved. Auto-signing is only allowed for non-negotiated templates.
\end{itemize}

Notice that the following set of three test cases achieves EC and respects all above-mentioned five constraints:

\begin{enumerate}
    \item ABFCFCDFGCFGIJ
    \item ABCECEGHGHCEGHICDEGIJ
    \item ABCEFGHGIJ
\end{enumerate}

For many white-box coverage criteria, it is known whether the problem of finding an optimal test set (in terms of different cost criteria, such as total number of edges, minimal number of test paths, etc.) is polynomial or not (cf. \cite{Polanski2025}). However, there is no theoretical research on the complexity of constraint path-based testing, even in the simplest case of EC, which is a CFG equivalent of branch coverage.

%Cel i struktura naszej pracy
In this article, we address this issue by showing that, for most types of constraints, the problem of finding a test set that achieves EC and respects a given set of constraints becomes NP-complete. The structure of the paper is as follows. In Section \ref{sec:problem}, we introduce all the necessary definitions and formally define the problem. Section 3 contains the main results on the computational complexity of the above-mentioned problem for five different constraint types. In Section 4, we show that, for the case ``NEGATIVE'', the problem is FPT with respect to the number of constraints. Section 5 follows with conclusions.

\section{Problem statement} \label{sec:problem}

Let $H=(V, E)$ be a directed graph. For $v \in V$ we define $d_{in}(v) = |(V \times \{v\}) \cap E|$ and $d_{out}(v) = |(\{v\} \times V) \cap E|$. A sequence $(v_1, v_2, \ldots, v_n)$, $n \geq 1$, $v_i \in V, i \in \{1, \ldots, n\}$ such that $(v_i, v_{i+1}) \in E$ for $i \in \{1, \ldots, n-1\}$ is called a \textit{path} in $H$. A set of all paths in $H$ is denoted by $P(H)$. The \textit{length} $|p|$ of a path $p=(v_1, \ldots,v_n)$ is $n-1$, the number of its edges. If $p = (v_1, \ldots, v_m)$ and $q = (v_{m+1}, \ldots, v_n)$ are paths in $P(H)$, and $(v_m, v_{m+1}) \in E$, then $p.q = (v_1, \ldots, v_n)$ is a path in $P(H)$, and we call it a  \textit{concatenation} of  $p$ and $q$. By $\#_vp$ we denote the number of occurrences of $v$ in $p$. A vertex $w \in V$ is \textit{reachable from} $v \in V$ (and we denote it by $v \rightsquigarrow w$) if there exists a path $(v, \ldots, w)$ in $H$. Each path $(v_1, \ldots, v_n)$ for $n \geq 2$ can be represented equivalently as a sequence of edges, $(e_1, \ldots, e_{n-1})$, where $e_i=(v_i, v_{i+1})$, $i \in \{1, \ldots, n-1\}$.

A \textit{constrained Control Flow Graph} (cCFG) is a 6-tuple $G=(V, E, s, t, C, X)$, where $(V, E)$ is a directed graph, $|V| \geq 2$, $s, t \in V$, $s \neq t$, $d_{in}(s)=0$, $d_{out}(t)=0$, $C \subset V \times V$ is a set of constraints, $X \in \{C_+, C_-, C_1, C_{\leq 1}, C_=\}$ is a constraint type, and $\forall v \in V\ s \rightsquigarrow v \wedge v \rightsquigarrow t$.

Let $G=(V, E, s, t, C, X)$ be a cCFG and let $p=(v_1, \ldots, v_n)$, $n \geq 2$, be a path in $G$. If $v_1=s$ and $v_n=t$ we call $p$ a \textit{test path}. By $TP_G$ we denote a set of all test paths in $G$. A \textit{test set} $TS_G$ for a cCFG $G$ is a set of test paths, i.e., $TS_G \subset TP_G$. We say that a test set $TS_G$ is \textit{admissible} if it respects the constraints in the following way:

\begin{itemize}
\item if $X=C_+$ then $\forall (x, y) \in C \ \exists p_1,p_2,p_3 \in P(G) \ \exists p \in TS_G\ p = p_1.x.p_2.y.p_3$,
\item if $X=C_-$ then $\forall (x, y) \in C\ \forall p_1, p_2, p_3 \in P(G)\ \forall p \in TS_G\ p \neq p_1.x.p_2.y.p_3$,
\item if $X=C_1$ then $\forall (x, y) \in C\ \exists p_1,p_2,p_3 \in P(G) \ \exists p \in TS_G\ p = p_1.x.p_2.y.p_3$, $\#_xp_1=\#_xp_2=0$, $\#_yp_2=\#_yp_3=0$, and ($\forall q \in TS_G\ q = p_1'.x.p_2'.y.p_3'$ for some $p_1', p_2', p_3' \in P(G)) \Rightarrow (q = p)$,
\item if $X=C_{\leq 1}$ then $\forall (x, y) \in C\ |A| \leq 1 \wedge \forall q \in TS_G\ (q = p_1'.x.p_2'.y.p_3'$ for some $p_1',p_2',p_3' \in P(G)) \Rightarrow q \in A$, where $A = \{p \in TS_G\ p = p_1.x.p_2.y.p_3$ for some $p_1,p_2,p_3 \in P(G)$ such that $\#_xp_1=\#_xy=p_2$, $\#_yp_2=\#_yp_3=0\}$,
\item if $X=C_=$ then $\forall (x, y) \in C\ \forall p \in TS_G\ (p=p_1.x.p_2 $ for some $p_1,p_2 \in P(G))\Rightarrow (p_2 = p_3.y.p_4$ for some $p_3,p_4\in P(G))$.
\end{itemize}

%\begin{itemize}
%    \item if $X=C_+$ then $\forall (v, w) \in C\ |{(s, \ldots, v, \ldots, w, \ldots, t) \in TS_G}| \geq 1$,
%    \item if $X=C_-$ then $\forall (v, w) \in C\ |{(s, \ldots, v, \ldots, w, \ldots, t) \in TS_G}| = 0$,
%    \item if $X=C_1$ then $\forall (v, w) \in C\ |{(s, \ldots, v, \ldots, w, \ldots, t) \in TS_G}| = 1$
%    \item if $X=C_{\leq 1}$ then $\forall (v, w) \in C\ |{(s, \ldots, v, \ldots, w, \ldots, t) \in TS_G}| \leq 1$
%    \item if $X=C_=$ then $\forall (v, w) \in C\ |{(s, \ldots, v, \ldots, t) \in TS_G}| =  |{(s, \ldots, v, \ldots, w , \ldots, t) \in TS_G}|$
%\end{itemize}

The first four types of constraints ($C_+, C_-, C_1, C_{\leq 1}$) were introduced in \cite{Klima2025a}. A $C_+$ (``POSITIVE'') constraint $(x, y)$ requires that $x$ precedes $y$ in at least one test path in $TS_G$. A $C_-$ (``NEGATIVE’‘) constraint $(x, y)$  requires that there is no test path in $TS_G$ with $x$ preceeding $y$. A $C_1$ (``ONCE'') constraint $(x, y)$ requires that there is exactly one test path in $TS_G$ in which $x$ precedes $y$; moreover, there can be only one $x$ before $y$ and only one $y$ after $x$. A $C_{\leq 1}$ (``MAX-ONCE’‘) constraint $(x, y)$ is similar to ``ONCE'' but requires that $x$ precedes $y$ in at most one test path in $TS_G$. Additionally, we introduce a new $C_=$ (``ALWAYS’') constraint $(x, y)$ requiring that if $x$ is on a test path, then it must be succeeded by $y$ in that test path. $C_+$ constraints model important control flows that must be tested, $C_-$ and $C_=$ constraints model semantic gaps in a source code, while $C_1$ and $C_{\leq 1}$ model execution cost-type constraints. 

Let $G=(V, E, s, t,C,X)$ be a cCFG, $TS_G$ be a test set for $G$, $e \in E$, and $p = (v_1, \ldots, v_n) \in TS_G$, $n\geq 2$. We say that $p$ \textit{covers} $e$, and write $e \in p$, if there exists $i \in \{1, \ldots, n-1\}$ such that $e=(v_i, v_{i+1})$. We say that $TS_G$ \textit{achieves EC for $G$ respecting $C$ of type $X$} if $TS_G$ is admissible and $\forall e \in E\  \exists p \in TS_G: e \in p.$

The main problem we investigate is as follows. Given a cCFG $G=(V, E, s, t, C, X)$, what is the computational complexity of finding a test set that achieves $EC$ for $G$ respecting $C$ of type $X$?

\section{Theoretical results on problem complexity}
It is easy to see that deciding whether there exists a test set that achieves EC for a given cCFG $G=(V, E, s, t, C, X)$ respecting $C$ of type $X$ belongs to NP. Every solution $TS_G$ can be verified in polynomial time by a deterministic Turing machine as follows. First, for each $e \in E$, we check if there exists $p \in TS_G$ such that $e \in p$. Next, for each constraint, we check whether it is respected by $TS_G$.  
The next subsections present more specialized results in that matter.

\subsection{POSITIVE}

Consider $G=(V, E, s, t, C, C_+)$, a cCFG with a set of POSITIVE constraints. It was proven in \cite{Polanski2025} that constructing a test set $TS$ achieving $EC$ for a CFG without a set of constraints is possible in polynomial time. We can construct $TS_G$ from it that achieves $EC$ while also respecting $C$ (provided it is possible). In order to do so, it suffices to check (for example using DFS or BFS) that for each constraint $(x,y)\in C$  $x$ is reachable from $s$, $y$ is reachable from $x$ and $t$ is reachable from $y$, getting a path of the form $(s,\ldots, x,\ldots, y, \ldots, t)$. After adding such a path for every constraint in $C$ to a set achieving $EC$ for $G$, we get a test set respecting $C$ that still achieves $EC$, in polynomial time. In case when at least one of the constraints was impossible to satisfy, we also get that information in polynomial time.

\subsection{NEGATIVE}

The following theorem, along with its proof, is very similar to Theorem 2 from \cite{Polanski2025}. The main difference is that in \cite{Polanski2025} constraints are edge-based rather than vertex-based as considered here. For completeness, we present the full proof rather than listing only the necessary changes.

\begin{theorem}\label{th_npc_negative}

Let $G=(V, E, s, t, C, C_-)$ be a cCFG with a set of NEGATIVE constraints. Deciding whether a test set achieving EC for $G$ respecting $C$ of type $C_-$ exists is NP--complete.
\end{theorem}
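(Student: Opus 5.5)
Membership in NP was already argued at the start of this section, so the plan is to show NP-hardness by a polynomial reduction from 3-SAT. Let $\varphi$ be a 3-CNF formula with variables $x_1,\ldots,x_n$ and clauses $c_1,\ldots,c_m$. I will build an acyclic cCFG $G$ whose vertices are $s$, $t$, a ``variable section'' and a ``clause section''.

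The variable section is a chain of diamonds. For each variable $x_i$, two parallel vertices $x_i^T$ and $x_i^F$ lie between junction vertices $u_{i-1}$ and $u_i$, with $u_0=s$. The clause section is a chain of gadgets following $u_n$. For clause $c_j=(\ell_{j,1}\vee\ell_{j,2}\vee\ell_{j,3})$, three parallel vertices $\ell_{j,1},\ell_{j,2},\ell_{j,3}$ lie between $w_{j-1}$ and $w_j$, with $w_0=u_n$ and $w_m=t$. The NEGATIVE constraints encode conflicts. For every literal vertex $\ell_{j,k}$ whose literal is $x_i$, I add $(x_i^F,\ell_{j,k})$; if the literal is $\neg x_i$, I add $(x_i^T,\ell_{j,k})$. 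Thus a test path may pass through a clause literal only if the variable choice made earlier on the same path satisfies that literal. Every vertex lies on an $s$--$t$ path, so $G$ is a valid cCFG, and it is acyclic and of polynomial size.

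The problem with this basic gadget is that edge coverage forces some test path through $x_i^F$ and some through $x_i^T$. Different test paths could then use different assignments, so each path must only satisfy all clauses under its own assignment. Covering the clause edges of an unused literal would also require a path consistent with that literal. For soundness I therefore want every path that enters the clause section to be forced to satisfy \emph{all} clauses under one global assignment. To get this I would add a bypass. From each junction $u_i$ there is an extra edge to $t$ (or to a fresh vertex feeding $t$), so the edges $x_i^T\to u_i$ and $x_i^F\to u_i$ can be covered by short paths that leave immediately. Likewise the clause section gets its own entry $s\to w_0'$ that skips the variable section, with edges from each $w_j$ to $t$. The remaining difficulty is to force consistency. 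I would replace the single clause chain with one clause chain per variable-choice prefix, or more simply add to the clause literal vertices NEGATIVE constraints from the bypass entry so that paths using the bypass can cover only the parts they are allowed to. The cleanest option I intend to try follows Theorem 2 of \cite{Polanski2025}. There, each literal-in-clause vertex gets its own small detour $w_{j-1}\to \ell_{j,k}\to w_j$ together with an always-available parallel ``dummy'' vertex $d_j$ that is constraint-free. Edge coverage of the literal edges then holds vacuously via paths from a variable-free entry. The hard requirement is enforced by a special edge that can only be covered by a path traversing the whole variable section and then the whole clause section without using the dummies, e.g.\ a single edge $u_n\to w_0$ that every other route avoids.

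The correctness argument has two directions. (\emph{$\Rightarrow$}) Given a satisfying assignment $\alpha$, I take the main path that follows $\alpha$ in the variable section and a true literal in each clause. I then add short auxiliary paths covering every other edge; these contain no constrained pair because they never go through both a variable vertex and a clause literal vertex. (\emph{$\Leftarrow$}) Any admissible EC test set contains a path covering the special edge. That path chooses exactly one of $x_i^T,x_i^F$ for each $i$ and one literal per clause, and the NEGATIVE constraints force each chosen literal to be true under the chosen assignment, so $\varphi$ is satisfiable.

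I expect the main obstacle to be designing the gadget so that the auxiliary paths can cover every edge without violating constraints, while still forcing one path to carry a full consistent assignment. This amounts to getting the bypass and dummy structure right. I will verify it carefully edge by edge, and also check that the construction remains acyclic so that the restricted-class claim holds.
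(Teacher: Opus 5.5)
Your overall strategy is the right one: reduce from 3-SAT, make one designated edge force a single test path to carry a consistent assignment, and cover every other edge with auxiliary paths that never contain both ends of a constraint. But the proposal never fixes the gadget that does the forcing, and the version you finally commit to breaks the $(\Leftarrow)$ direction. Suppose each clause gadget has a constraint-free dummy $d_j$ parallel to $\ell_{j,1},\ell_{j,2},\ell_{j,3}$, and each $w_j$ has an exit to $t$. Then a path covering $(u_n,w_0)$ can continue $w_0\to d_1\to w_1\to t$, or run through dummies all the way to $t$, and it touches no literal vertex. Such a path violates no constraint, so its existence says nothing about $\varphi$. Your claim that it ``chooses one literal per clause'' is therefore unjustified. ``Without using the dummies'' is exactly the property the construction has to enforce, and nothing in your graph or constraint set enforces it.

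The missing ingredient is a mechanism that stops the designated path from leaving the main route. In your own setting the simplest fix is to drop the dummies and the exits $w_j\to t$ entirely, keep the exits $u_i\to t$, and add a separate entry edge $s\to w_0$. Then:
\begin{itemize}
\item every path through $(u_n,w_0)$ must visit a literal vertex in every clause, since the clause chain has no other way out;
\item the literal edges and $(s,w_0)$ are covered by paths $s\to w_0\to\cdots\to t$, which contain no variable vertex;
\item the variable edges are covered by paths $s\to\cdots\to u_i\to t$, which contain no literal vertex.
\end{itemize}
With this change both directions go through and the graph stays acyclic.

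The paper forces the main path differently and has no variable section at all. It joins the clause gadgets through ``blue'' vertices $b_1,\ldots,b_{n+1}$ and gives each gadget private ``red'' entries $p_i,q_i$ and exits $y_i,z_i$. It then adds NEGATIVE constraints between every blue and every red vertex, in both orders. Hence the path covering $(s,b_1)$ cannot use a red exit and must traverse all gadgets. Consistency is enforced by constraints between complementary literal vertices, which the red paths never trigger because each visits a single gadget.
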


\begin{proof}
We will reduce 3--SAT to the problem above. Let $n,k$ be positive integers and $\varphi = \varphi_1 \wedge \ldots \wedge \varphi_n$ be an instance of the 3--SAT problem over a set of variables $X=\{x_1,\ldots,x_k\}$. Each clause $\varphi_i$, $i = 1, \ldots, n$, is of the form $x_{i,1} \vee x_{i,2} \vee x_{i,3}$ with each $x_{i,1}, x_{i,2}, x_{i,3}$ being either an instance of a variable from $X$ or its negation. The idea is as follows. Each $\varphi_i$ corresponds to a gadget like the one depicted in Fig. \ref{gadgets}(a). The gadgets are joined through in-between nodes, by edges depicted in blue (bold arrows) and red (dashed arrows), as seen in Fig. \ref{gadgets}(b), with constraints that forbid using a red node after a blue one and vice versa. Another set of constraints is added that forbids using nodes corresponding to an instance of a variable after exercising the node corresponding to the negation of said variable, and vice versa.

%\begin{figure}[!ht]
%\centering
%\includegraphics[scale=0.2]%{fig_singleGadget}
%\caption{A single gadget representing a clause $\varphi_i$}
%\label{figGadget}
%\end{figure}

Formally, let $G=(V, E, s, t, C, C_-)$ be a cCFG such that:

\begin{itemize}
    \item $V=V_{black}\cup V_{blue}\cup V_{red}$,
    \item $V_{black} = \{s, t\} \cup \{v_{i, 1}, v_{i, 2}, v_{i, 3}, v_i, w_i\ |\ i \in \{1, \ldots n\}\}$,
    %\item $V_{black}=\{s,t\}\cup\{v_{1,1},\ldots,v_{n,1}\}\cup\{v_{1,2},\ldots,v_{n,2}\}\cup\{v_{1,3},\ldots,v_{n,3}\}\cup$\\
    %$\cup\{v_1,\ldots,v_n\}\cup\{w_1,\ldots,w_n\}$,
    \item $V_{blue}=\{b_1,\ldots,b_n,b_{n+1}\}$,
    \item $V_{red} = \{p_i, q_i, y_i, z_i\ |\ i \in \{1, \ldots, n\}\}$,
    %\item $V_{red}=\{p_1,\ldots,p_n\}\cup\{q_1,\ldots,q_n\}\cup\{y_1,\ldots,y_n\}\cup\{z_1,\ldots,z_n\}$,
    \item $E=E_{black} \cup E_{blue} \cup E_{red}$,
    \item $E_{black}=\{(v_i, v_{i, j}), (v_{i, j}, w_i)\ |\ i \in \{1, \ldots, n\}, j \in \{1, 2, 3\}\}$,
    %\item $E_{black}=\{(v_i,v_{i,1}) | i \in \{1,\ldots,n\}\}\cup\{(v_i,v_{i,2}) | i \in \{1,\ldots,n\}\}\cup$\\
    %$\cup\{(v_i,v_{i,3}) | i \in \{1,\ldots,n\}\}\cup\{(v_{i,1},w_i) | i \in \{1,\ldots,n\}\}\cup$\\
    %$\cup\{(v_{i,2},w_i) | i \in \{1,\ldots,n\}\}\cup\{(v_{i,3},w_i) | i \in \{1,\ldots,n\}\}$,
    \item $E_{blue} = \{(s, b_1), (b_{n+1}, t)\} \cup \{(w_i, b_{i+1}), (b_i, v_i)\ |\ i \in \{1, ..., n\}\},$
    %\item $E_{blue}=\{(s,b_1)\}\cup\{(w_i,b_{i+1}) | i \in \{1,\ldots,n\}\}\cup$\\
    %$\cup\{(b_i,v_i) | i \in \{1,\ldots,n\}\}\cup\{(b_{n+1},t)\}$,
    \item $E_{red} = \{(s, p_i), (s, q_i), (p_i, v_i), (q_i, v_i), (w_i, y_i), (w_i, z_i), (y_i, t), (z_i, t)\ |\ i \in \{1, ..., n\}\}$.
    %\item $E_{red}=\{(s,p_i) | i \in \{1,\ldots,n\}\} \cup \{(s,q_i) | i \in \{1,\ldots,n\}\} \cup$\\
    %$\cup\{(p_i,v_i) | i \in \{1,\ldots,n\}\}\cup\{(q_i,v_i) | i \in \{1,\ldots,n\}\} \cup$\\
    %$\cup\{(w_i,y_i) | i \in \{1,\ldots,n\}\}\cup\{(w_i,z_i) | i \in \{1,\ldots,n\}\} \cup$\\
    %$\cup\{(y_i,t) | i \in \{1,\ldots,n\}\}\cup\{(z_i,t) | i \in \{1,\ldots,n\}\}$,
\end{itemize}

\begin{figure}[!ht]
\centering
\includegraphics[scale=0.4]{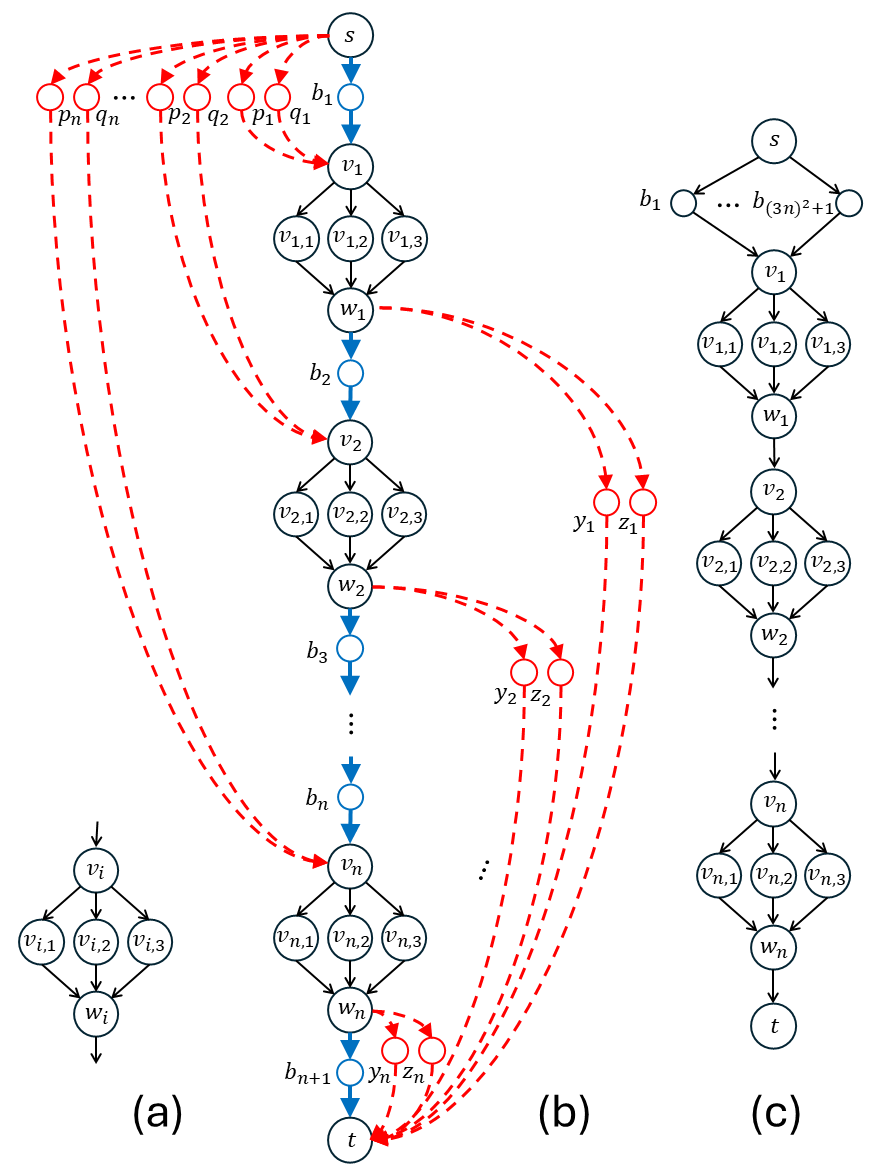}
\caption{Three gadgets used in Theorems \ref{th_npc_negative}, \ref{th_npc_maxonce}, and \ref{th_npc_once}}
\label{gadgets}
\end{figure}

It remains to define a suitable set of constraints for $G$. Let 
\begin{eqnarray*}
C_1 &=& \{(r,b)\ |\ r\in V_{red}, b \in V_{blue}\} \cup \{(b,r)\ |\ b\in V_{blue}, r \in V_{red}\}, \\
C_2 & = &  \{(v_{i,j}, v_{i',j'})\ |\ i, i' \in \{1, \ldots, n\},\ j,j'\in\{1,2,3\}, \\
& & x_{i,j}=\neg x_{i',j'} \mbox{\ (as variables in } \varphi) \}
\end{eqnarray*}

%$$C_1 = \{(r,b)\ |\ r\in V_{red}, b \in V_{blue}\} \cup \{(b,r)\ |\ b\in V_{blue}, r \in V_{red}\},$$ 
%$$\begin{array}{lll} C_2 & = &  \{(v_{i,j}, v_{i',j'})\ |\ i, i' \in \{1, \ldots, n\},\ j,j'\in\{1,2,3\},\ \\ & & x_{i,j},x_{i',j'} \mbox{\ correspond to a variable and its negation in } \varphi \}.\end{array}$$ 

Finally, let $C=C_1 \cup C_2$ be a set of constraints for $G$. %Note that if a test set $T$ achieves EC for $G$ respecting $C$, then $|T| \geq 2n+1$, with a total cost of at least $|E|=15n+1$.

We will now prove that if $\varphi$ is satisfiable, then there exists a test set $T_\varphi$ achieving EC for $G$ respecting $C$. Fix $\nu: X \rightarrow \{0, 1\}$, an assignment of logical values to the variables, that satisfies $\varphi$. For the first element of $T_\varphi$, choose a path $t_0$ using only vertices from $V_{blue}$ and $V_{black}$  corresponding to $\nu$ in the following way. At each $v_i$, $i \in \{1, \ldots, n\}$, choose an edge leading to a vertex that corresponds to either a variable that is set as true by $\nu$ or a negation of a variable that is set as false by $\nu$ (one of these options must be possible at each $i$, since each $\varphi_i$ must be satisfied). We are left with exactly $2n$ edges originating in $s$, so we need at least $2n$ more paths in $T_\varphi$. Since for any $i \in \{1, \ldots, n\}$ we have exactly two paths from $s$ to $v_i$ that do not use edges from $E_{blue}$ (namely $(s,p_i,v_i)$ and $(s,q_i,v_i)$), as well as exactly two paths from $w_i$ to $t$ that do not use edges from $E_{blue}$ (namely $(w_i,y_i,t)$ and $(w_i,z_i,t)$), for each $i$ we can add to $T_\varphi$ two paths that do not have common edges and together with $t_0$ cover a gadget representing $\varphi_i$, with each edge of it covered exactly once. Since we need two such paths for each gadget, together with $t_0$, we have $2n+1$ paths in $T_\varphi$. Note that each edge in $G$ was used exactly once by $T_\varphi$ and that $T_\varphi$ respects $C$.

We will now prove that if there exists a test set $T_\varphi$ achieving EC for $G$ respecting $C$, then $\varphi$ is satisfiable. Since $T_\varphi$ respects $C$, any test path starting with the edge $(s,b_1)$ must only use vertices from $V_{black}$ and $V_{blue}$. Further, since $T_\varphi$ achieves EC for $G$, there must exist at least one such path $p$. Denote the sequence of numbers from 1 to 3 that correspond to the path $p$ taken in each gadget corresponding to each $\varphi_i$ by $(k_1,\ldots,k_n)$. Define $\nu: X \rightarrow \{0, 1\}$ by setting values for the variables in $X$ such that each $x_{i,k_i}$ is true (note that it does not lead to contradictions thanks to $C_2$). For any remaining variable $x$, set $\nu(x)$ to any value. It is easy to see that $\nu$ satisfies $\varphi$.
\end{proof}

\subsection{MAX-ONCE}

\begin{theorem}\label{th_npc_maxonce}
Let $G=(V, E, s, t, C, C_{\leq 1})$ be a cCFG with a set of MAX-ONCE constraints. Deciding whether a test set achieving EC for $G$ respecting $C$ of type $C_{\leq 1}$ exists is NP--complete.
\end{theorem}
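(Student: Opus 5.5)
Membership in NP follows from the general verification argument at the start of this section, so I will only prove NP-hardness, again by reduction from 3--SAT. I will reuse the graph $G$ from the proof of Theorem \ref{th_npc_negative} (the clause gadgets of Fig. \ref{gadgets}(a) chained by blue in-between nodes, with red entry paths $(s,p_i,v_i),(s,q_i,v_i)$ and red exit paths $(w_i,y_i,t),(w_i,z_i,t)$). The graph is acyclic, which also gives the restricted-class claim. The task is to replace the NEGATIVE constraints by MAX-ONCE constraints that play the same role.

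\textbf{Encoding the constraints.} A MAX-ONCE constraint $(x,y)$ allows $x$ before $y$ in at most one test path, so it cannot forbid a pattern outright. The idea is to make every forbidden pattern be forced to occur once already, by a path that must exist for coverage reasons, so that no second occurrence is possible.

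\emph{Complementary literals.} For each pair $v_{i,j},v_{i',j'}$ with $x_{i,j}=\neg x_{i',j'}$, and say $i<i'$, I add the constraint $(v_{i,j},v_{i',j'})$. (Pairs with $i>i'$ are handled symmetrically; pairs with $i=i'$ cannot both lie on one path and can be ignored.)

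\emph{Blue/red separation.} I would impose $(b_1,b_{n+1})$ as a MAX-ONCE constraint. The edges $(s,b_1)$ and $(b_{n+1},t)$ must each be covered. Every path through $b_1$ must either continue along blue edges to $b_{n+1}$ or leave the blue chain at some $w_i$ via $y_i$ or $z_i$. If I cannot argue that at most one path uses $b_1$, I will instead choose constraints that make any path using $(s,b_1)$ run all the way to $b_{n+1}$ and be unique. This is the step I expect to be the main obstacle: ruling out ``mixed'' paths that start blue and exit red, or enter red and continue blue. A mixed path could pick a different literal in some gadget and thereby dodge the single-path assignment argument.

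My first attempt at this step is a modified gadget: remove the red exits from the graph's structure in a way that preserves EC feasibility, or add constraints $(b_i,y_i)$ and $(b_i,z_i)$. Constraints $(b_1,y_i)$ and $(b_1,z_i)$ would not work, since each would still allow one occurrence. So the fallback I actually prefer is to add pendant structure forcing one occurrence. For each intended forbidden pair $(x,y)$, add a fresh path $s\to a\to x$ and $y\to c\to t$ together with an edge realizing $x\leadsto y$ on a dedicated path. Coverage of $a$'s edges then forces a path containing $x$ before $y$, which uses up the single allowed occurrence. Concretely, I would let the constraint be between fresh marker vertices placed on gadget edges, so that the forced witness path is identifiable. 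The precise gadget is presumably that of Fig. \ref{gadgets}(c).

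\textbf{Correctness.} For the forward direction, given a satisfying $\nu$, take the same $2n+1$ paths as before plus the forced witness paths. I then check that every MAX-ONCE pair occurs at most once: the blue path respects the complementary-literal pairs because $\nu$ is consistent, and the red paths each touch only one gadget. For the backward direction, the witness paths exhaust each constraint, so the unique path through $(s,b_1)$ cannot contain complementary literals. Reading off its choices $(k_1,\dots,k_n)$ gives a consistent partial assignment satisfying every clause, exactly as in Theorem \ref{th_npc_negative}.
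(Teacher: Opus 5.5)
\textbf{Genuine gap.} Your proposal does not yet contain a reduction. The decisive step is forcing some test path to traverse every clause gadget while avoiding complementary literals. That is exactly the step you leave open, and neither device you list closes it.

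In the graph of Theorem~\ref{th_npc_negative}, the backward direction rests entirely on blue/red separation, and MAX-ONCE constraints can only limit paths, never forbid them. With only your complementary-literal constraints and $(b_1,b_{n+1})$, every formula gives a yes-instance:
\begin{itemize}
\item cover $(s,b_1),(b_1,v_1)$ by $(s,b_1,v_1,v_{1,a},w_1,y_1,t)$;
\item cover $(w_i,b_{i+1}),(b_{i+1},v_{i+1})$ by $(s,p_i,v_i,v_{i,a},w_i,b_{i+1},v_{i+1},v_{i+1,a'},w_{i+1},y_{i+1},t)$;
\item cover $(w_n,b_{n+1}),(b_{n+1},t)$ by $(s,p_n,v_n,v_{n,a},w_n,b_{n+1},t)$;
\item cover everything else by single-gadget red paths.
\end{itemize}
Each of these paths contains at most one constrained pair, and distinct paths contain distinct pairs. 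So no path ever has to cross all gadgets.

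Your fallback, as described, does not repair this. Covering $(s,a)$ and $(a,x)$ only forces a path through $x$, not one that continues to $y$. Forcing ``$x$ before $y$'' needs a dedicated connection $x\to m\to y$. For $x=v_{i,j}$, $y=v_{i',j'}$ this connection is a shortcut that a path from $b_1$ may itself use. Such a path is then the unique witness for $(x,y)$, contains the complementary pair, and skips gadgets $i+1,\dots,i'-1$. Hence your backward claim that ``the unique path through $(s,b_1)$ cannot contain complementary literals'' is not justified. Nothing you add makes that path unique or forces it to reach $b_{n+1}$.

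\textbf{The missing idea.} Exploit MAX-ONCE by counting rather than by exhausting each constraint.
\begin{itemize}
\item \emph{Construction.} The paper discards the red/blue structure and chains the gadgets serially (Fig.~\ref{gadgets}(c)), so every test path passes through every clause gadget. It adds $(3n)^2+1$ parallel entry edges $s\to b_i\to v_1$. The constraints are the at most $(3n)^2$ complementary pairs $(v_{i_1,j_1},v_{i_2,j_2})$ with $i_1<i_2$.
\item \emph{Backward direction.} Any covering test set has at least $(3n)^2+1$ paths, one per out-edge of $s$. Each pair occurs in at most one of them. By pigeonhole, some test path contains no constrained pair, and its choices $(k_1,\dots,k_n)$ give a consistent satisfying assignment.
\item \emph{Forward direction.} Take $(3n)^2$ copies of the $\nu$-path (one per $b_i$, $i\le (3n)^2$) together with the three ``column'' paths $t'_j=(s,b_{(3n)^2+1},v_1,v_{1,j},w_1,\dots,v_n,v_{n,j},w_n,t)$. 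These cover all edges. A pair $(v_{i_1,j_1},v_{i_2,j_2})$ can occur only in $t'_j$ with $j=j_1=j_2$, hence at most once.
\end{itemize}
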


\begin{proof}
We will reduce 3--SAT to the problem above. Let $n,k$ be positive integers and $\varphi = \varphi_1 \wedge \ldots \wedge \varphi_n$ be an instance of the 3--SAT problem over a set of variables $X=\{x_1,\ldots,x_k\}$. Each clause $\varphi_i$, $i = 1, \ldots, n$, is of the form $x_{i,1} \vee x_{i,2} \vee x_{i,3}$ with each $x_{i,1}, x_{i,2}, x_{i,3}$ being either an instance of a variable from $X$ or its negation. The idea is as follows. Each $\varphi_i$ will correspond to a gadget like the one depicted in Fig. \ref{gadgets}(a). The gadgets will be joined by edges in sequence, as seen in Fig. \ref{gadgets}(c), with constraints that restrict using nodes corresponding to an instance of a variable after using the node corresponding to the negation of said variable and vice versa.

%\begin{figure}[!ht]
%\centering
%\includegraphics[scale=0.2]{fig_singleGadget2}
%\caption{A single gadget representing a clause $\varphi_i$}
%\label{figGadget2}
%\end{figure}

Formally, let $G=(V, E, s, t, C, C_{\leq 1})$ be a cCFG such that:

\begin{itemize}
    \item $V=\{s, t, b_1, \ldots, b_{(3n)^2+1}\} \cup \{v_{i, 1}, v_{i, 2}, v_{i, 3}, v_i, w_i\ |\ i \in \{1, \ldots, n\}\}$,
    %\item $V=\{s,t\}\cup\{v_{1,1},\ldots,v_{n,1}\}\cup\{v_{1,2},\ldots,v_{n,2}\}\cup\{v_{1,3},\ldots,v_{n,3}\}\cup$\\
    %$\cup\{v_1,\ldots,v_n\}\cup\{w_1,\ldots,w_n\}\cup\{b_1,\ldots,b_{(3n)^2+1}\}$.
    \item $E=\{(v_i, v_{i, j}), (v_{i, j}, w_i)\ |\ i \in \{1,\ldots,n\}, j \in \{1, 2, 3\}\} \cup \{(w_i, v_{i+1})\ |\ i \in \{1,\ldots,n-1\}\}\\$
    $\cup \{(s, b_i), (b_i, v_1)\ |\ i \in \{1,\ldots,(3n)^2+1\}\}  \cup \{(w_n, t)\}$.
    %\item $E=\{(v_i,v_{i,1}) | i \in \{1,\ldots,n\}\}\cup\{(v_i,v_{i,2}) | i \in \{1,\ldots,n\}\}\cup$\\
    %$\cup\{(v_i,v_{i,3}) | i \in \{1,\ldots,n\}\}\cup\{(v_{i,1},w_i) | i \in \{1,\ldots,n\}\}\cup$\\
    %$\cup\{(v_{i,2},w_i) | i \in \{1,\ldots,n\}\}\cup\{(v_{i,3},w_i) | i \in \{1,\ldots,n\}\}\cup$\\
    %$\cup\{(s,b_i) | i \in \{1,\ldots,(3n)^2+1\}\}\cup\{(b_i,v_1) | i \in \{1,\ldots,(3n)^2+1\}\}\cup$\\
    %$\cup\{(w_i,v_{i+1}) | i \in \{1,\ldots,n-1\}\}\cup\{(w_n,t)\}$.
\end{itemize}

%\begin{figure}[!ht]
%\centering
%\includegraphics[scale=0.3]{fig_fullGadget2}
%\caption{A full gadget corresponding to a formula $\varphi$}
%\label{figGadgetsAll2}
%\end{figure}

It remains to define a suitable set of constraints for $G$. Let $$\begin{array}{lll} C & = &  \{(v_{i_1,n_1}, v_{i_2,n_2})\ |\ i_1<i_2,\ i_1, i_2 \in \{1, \ldots, n\},\ n_1,n_2\in\{1,2,3\},\ \\ & & x_{i_1,n_1}=\neg x_{i_2,n_2} \mbox{\ (as variables in } \varphi) \}.\end{array}$$

We will now prove that if $\varphi$ is satisfiable, then there exists a test set $T_\varphi$ achieving EC for $G$ respecting $C$. Fix $\nu: X \rightarrow \{0, 1\}$, an assignment of logical values to the variables that satisfies $\varphi$. For the first element of $T_\varphi$, choose a path $t_1$ starting with vertices $s$ and $b_1$, corresponding to $\nu$ in the following way. At each $v_i$, $i \in \{1, \ldots, n\}$, choose an edge leading to a vertex that corresponds to either a variable that is set as true by $\nu$ or a negation of a variable that is set as false by $\nu$ (one of these options must be possible at each $i$, since each $\varphi_i$ must be satisfied). Now add to $T_\varphi$ paths $t_2$ through $t_{(3n)^2+1}$ copying $t_1$ almost exactly, differing only on a second vertex where instead of $b_1$ we use $b_i$, $i \in \{2,\ldots,(3n)^2\}$. Note that none of those paths contains a sequence of nodes that is restricted by $C$, since we used $\nu$ for their construction. We are left with exactly one edge originating in $s$. For $i\in \{1, 2, 3\}$ we define $t_i'$ by
$$t_i' = (s,b_{(3n)^2+1},v_1,v_{1,i},w_1,v_2, v_{2, i}, w_2,\ldots,v_n,v_{n,i},w_n,t).$$

Note that each pair of vertices from $C$ is contained at most once in at most one of the paths $t_1'$, $t_2'$, $t_3'$, and that each gadget corresponding to $\varphi_i$ has its edges covered by those three paths alone. Adding $t_1'$, $t_2'$, $t_3'$ to $T_\varphi$, we get a test set achieving EC for $G$ respecting $C$.

We will now prove that if there exists a test set $T_\varphi$ achieving EC for $G$ respecting $C$, then $\varphi$ is satisfiable. Since there are $(3n)^2+1$ edges outgoing from $s$, $T_\varphi$ must contain at least as many paths. Further, since there are $3n$ vertices that correspond to a variable or its negation (i.e. those of the form $v_{l,m}$, $l\in \{1, \ldots, n\}$, $m\in \{1, \ldots, 3\}$), there are at most $(3n)^2$ pairs in $C$. Therefore, since there are more paths in $T_\varphi$ than the number of pairs restricted by $C$ (with each pair appearing in at most one path from $T_\varphi$), there must exist at least one path, let us denote it by $t_0$, which is in $T_\varphi$ but does not contain any pairs of vertices restricted by $C$. Denote the sequence of numbers from 1 to 3 that correspond to the path $t_0$ taken in each gadget corresponding to each $\varphi_i$ by $(k_1,\ldots,k_n)$. Define $\nu: X \rightarrow \{0, 1\}$ by setting values for the variables in $X$ such that each $x_{i,k_i}$ is true (note that it does not lead to contradictions thanks to $t_0$ not containing \textit{any} pairs of vertices restricted by $C$). For any remaining variable $x$, set $\nu(x)$ to any value. It is easy to see that $\nu$ satisfies $\varphi$.
\end{proof}

\subsection{ONCE}

\begin{theorem}\label{th_npc_once}
Let $G=(V, E, s, t, C, C_1)$ be a cCFG with a set of ONCE constraints. Deciding whether a test set achieving EC for $G$ respecting $C$ of type $C_1$ exists is NP--complete.
\end{theorem}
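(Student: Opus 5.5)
The plan is a reduction from 3--SAT that reuses the idea of Theorem~\ref{th_npc_maxonce}: gadgets as in Fig.~\ref{gadgets}(a), chained in sequence, with a large fan of parallel entry vertices $b_1,\ldots,b_m$ between $s$ and $v_1$. The constraint set $C$ again consists of the pairs $(v_{i_1,n_1},v_{i_2,n_2})$ with $i_1<i_2$ and $x_{i_1,n_1}=\neg x_{i_2,n_2}$. Membership in NP was already observed at the beginning of this section.

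The soundness direction transfers almost verbatim. A ONCE constraint in particular forces each pair of $C$ to occur in at most one test path. With $m>|C|$ (for instance $m=(3n)^2+1$) and at least $m$ test paths forced by the out-edges of $s$, pigeonhole yields a path $t_0$ that contains no constrained pair. Reading off the literal chosen by $t_0$ in each gadget gives a consistent satisfying assignment, exactly as in the proof of Theorem~\ref{th_npc_maxonce}.

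The new difficulty is completeness. Every pair of $C$ must now appear in \emph{exactly} one test path, with a single $x$ before a single $y$, and in no other path. The paths $t_1',t_2',t_3'$ from the MAX-ONCE proof do not achieve this, because they realise only pairs with equal literal index. So I would modify the graph as in the third gadget of Fig.~\ref{gadgets}. For each gadget $i$, add a neutral vertex $u_i$ with edges $(v_i,u_i),(u_i,w_i)$. For each constraint $(x,y)$ with $x$ in gadget $i_1$ and $y$ in gadget $i_2$, build one dedicated test path. It enters through a private vertex from $s$, takes $x$ in gadget $i_1$ and $y$ in gadget $i_2$, and takes $u_i$ in every other gadget. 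This path contains exactly the one constrained pair $(x,y)$, once each. Given a satisfying assignment $\nu$, the test set consists of:
\begin{itemize}
\item the $m$ copies of the $\nu$-path, which contain no pair;
\item one dedicated path per constraint;
\item a few extra paths through the $u_i$ and through the literals not used by $\nu$ (for example, all-neutral paths with a single literal each).
\end{itemize}
Together these cover all edges and respect all ONCE constraints.

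The main obstacle is that the neutral vertices break soundness: a $b_j$-path could take $u_i$ everywhere and avoid every pair. I would first try to fix this with additional ONCE constraints that penalise using a neutral vertex after a $b_j$. Concretely, add constraints $(b_j,u_i)$, which are satisfiable by exactly one designated path per $j$ and so contribute at most $m\cdot n$ ``used'' paths. Then enlarge the fan to $m>|C|+(\text{number of these new constraints used by } b\text{-paths})$, or route each $b_j$ through two parallel copies, so that pigeonhole still leaves a $b$-path using neither a constrained literal pair nor any $u_i$. If this counting does not close, because the new constraints themselves scale with $m$, the fallback is a red/blue separation as in Theorem~\ref{th_npc_negative}. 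There the dedicated pair paths enter and leave through red vertices that $b$-vertices can never precede, enforced by ONCE constraints whose unique witness is a fixed auxiliary path. Getting this bookkeeping consistent is the step I expect to require the most care.
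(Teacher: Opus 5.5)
\textbf{There is a genuine gap.} Your soundness argument on the original MAX-ONCE graph is right, and so is your observation that $t_1',t_2',t_3'$ do not witness every pair. But the proposal is not yet a reduction. It breaks at exactly the step you flagged, and the concrete fix you give does not work.

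A ONCE constraint $(b_j,u_i)$ forbids nothing; it \emph{demands} a witness. Consider, for each $j$, the single path $(s,b_j,v_1,u_1,w_1,\ldots,v_n,u_n,w_n,t)$. It witnesses every $(b_j,u_i)$ exactly once. Add your dedicated pair paths, which enter through private vertices and so witness no $(b_j,u_i)$. Add neutral single-literal paths, also routed through a private vertex, for the remaining literal edges. The result is an admissible edge-covering test set for \emph{every} formula with $C\neq\emptyset$, satisfiable or not. Enlarging the fan or doubling the $b_j$ cannot repair this, because the penalty constraints force each $b_j$ to carry a neutral path rather than forbidding one, so no pigeonhole over the fan yields a clean path. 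The red/blue fallback is left unspecified, so the soundness of the modified graph is simply missing.

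The missing idea is that the graph need not change. Completeness already holds on the MAX-ONCE graph if you group the pairs of $C$ by their $\nu$-false endpoint. This is what the paper does:
\begin{itemize}
\item Reduce from 3--SAT in which the three literals of each clause are pairwise distinct.
\item Let $p$ be the $\nu$-path. For a variable $x_j$, reroute $p$ through every occurrence of the $\nu$-true literal of $x_j$; by distinctness there is at most one per gadget. The result $p_{x_j}$ still contains no pair of $C$.
\item For each occurrence $v_{\alpha,\beta}$ of the $\nu$-false literal of $x_j$, take $p_{x_j}$ and switch gadget $\alpha$ to $v_{\alpha,\beta}$. Every other literal vertex on this path is $\nu$-true, so it contains exactly the pairs of $C$ incident to $v_{\alpha,\beta}$. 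Each such pair occurs with single occurrences, since the graph is acyclic.
\item Every pair of $C$ has exactly one $\nu$-false endpoint, and $v_{\alpha,\beta}$ lies on no other path. Hence every pair is witnessed by exactly one test path.
\item Add the paths $p_{x_j}$ and copies of $p$ through the unused $b_i$. This completes edge coverage without creating pairs.
\end{itemize}
Your pigeonhole soundness argument then applies verbatim.
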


\begin{proof}
The idea of the proof is to reduce a version of 3--SAT, where each variable or its negation may appear in a clause no more than once, to the problem above. Let $n,k$ be positive integers and $\varphi = \varphi_1 \wedge \ldots \wedge \varphi_n$ be an instance of the 3--SAT problem over a set of variables $X=\{x_1,\ldots,x_k\}$. Each clause $\varphi_i$, $i = 1, \ldots, n$, is of the form $x_{i,1} \vee x_{i,2} \vee x_{i,3}$ with each $x_{i,1}, x_{i,2}, x_{i,3}$ being either an instance of a variable from $X$ or its negation with them being pairwise distinct. The idea is as follows. Each $\varphi_i$ will correspond to a gadget like the one depicted in Fig. \ref{gadgets}(a). The gadgets will be joined by edges in sequence, as seen in Fig. \ref{gadgets}(c), with constraints that restrict using nodes corresponding to an instance of a variable after using the node corresponding to the negation of said variable and vice versa to appear exactly once. It is very similar, albeit a little more complicated, to what was done in the proof of Theorem \ref{th_npc_maxonce}. In fact, it suffices to use $G=(V, E, s, t, C, C_1)$, a cCFG constructed with $V$, $E$, $s$, $t$, and $C$ identical to the one defined in the proof of Theorem \ref{th_npc_maxonce}, we will therefore omit writing the definition again.

%\begin{figure}[!ht]
%\centering
%\includegraphics[scale=0.2]{fig_singleGadget2}
%\caption{A single gadget representing a clause $\varphi_i$}
%\label{figGadget3}
%\end{figure}

Like in that proof, we will start with proving that if $\varphi$ is satisfiable, then there exists a test set $T_\varphi$ achieving EC for $G$ respecting $C$. Fix $\nu: X \rightarrow \{0, 1\}$, an assignment of logical values to the variables that satisfies $\varphi$. For simplicity we define $\overline\nu(x_j)$, $j \in \{1,\ldots,k\}$, as $x_j$ iff $x_j$ is set as true by $\nu$ and as $\neg x_j$ in the other case. For the first element of $T_\varphi$, choose a path $p$ starting with vertices $s$ and $b_1$, corresponding to $\nu$ in the following way. At each $v_i$, $i \in \{1, \ldots, n\}$, choose an edge leading to a vertex that corresponds to either a variable that is set as true by $\nu$ or a negation of a variable that is set as false by $\nu$ (one of these options must be possible at each $i$, since each $\varphi_i$ must be satisfied). Note that $p$ does not contain any pairs of vertices restricted by $C$.

Take any variable $x_j$, $j \in \{ 1,\ldots,k \}$, and construct a path $p_{x_j}$ from $p$ as follows. At each $v_i$, $i \in \{ 1,\ldots,n \}$, if one of the vertices among $v_{i,1}$, $v_{i,2}$, $v_{i,3}$ corresponds to $\overline\nu(x_j)$, make it a next vertex in $p_{x_j}$ instead of the one used by $p$ (due to elements of $\varphi_i$ being pairwise distinct, there is at most one such vertex). Note that $p_{x_j}$ also does not contain any pairs of vertices restricted by $C$, and also passes through every vertex in $G$ corresponding to $\overline\nu(x_j)$ (it may well be the case that there are no such vertices, it does not matter for the construction). Now, for each vertex $v_{\alpha,\beta}$ corresponding to $\neg\overline\nu(x_j)$, add to $T_\varphi$ a path constructed from $p_{x_j}$ by replacing (at most) two vertices in the following way. If there is an edge $e$ outgoing from $s$ that is unused by every path added to $T_\varphi$ so far, change the second vertex from the one used by $p_{x_j}$ to the vertex having $e$ as an incoming edge. Further, replace a vertex following $v_\alpha$ in $p_{x_j}$ by $v_{\alpha,\beta}$. Note that (since $p_{x_j}$ does not contain any pairs of vertices restricted by $C$), the only pairs from $C$ that are contained in the path after substituting $v_{\alpha,\beta}$ are of the form $(\cdot, v_{\alpha,\beta})$ or $(v_{\alpha,\beta},\cdot)$, with the other coordinate corresponding to $\overline\nu(x_j)$. What is more, all such pairs from $C$ are contained in $p_{x_j}$ after modifying it in this way (since $p_{x_j}$ contains all vertices corresponding to $\overline\nu(x_j)$).

%Note that after adding a path to $T_\varphi$ for every vertex corresponding to $\neg\overline\nu(x_j)$, there is now exactly one path for each pair in $C$ % \textcolor{blue}{(JR) ja bym troche pociagnal dowod w tym miejscu, jesli mamy $x_1$ odpowiednio w klauzulach $\varphi_1, \varphi_2, \varphi_3$, $x_1$ jest ustalone na true w wartosciowaniu $\nu$ oraz $x_1$ jest niezaprzeczone w $\varphi_1$ oraz zaprzeczone w $\varphi_2$ i $\varphi_3$, to ta powstala sciezka bedzie OK dla dwoch constraintow w $C$ - niby to jest nadal prawda, ale troche trzeba sie nad tym zastanowic, imo za szybko} containing a vertex corresponding to $x_j$, but paths constructed for $x_j$ do not contain pairs from $C$ containing any vertices corresponding to other variables.

Proceeding in an analogous way for every other variable from $X$, paths from $T_\varphi$ now respect $C$; it remains to ensure EC for $G$. Note that it follows from the construction that edges adjacent to each vertex $v_{l,m}$, $l\in \{1, \ldots, n\}$, $m\in \{1, 2, 3\}$ are already covered. Indeed, since each vertex $v_{l,m}$ corresponds to some variable $x_\iota$ or its negation, they were either covered by a path of the form $p_{x_\iota}$ if $\overline\nu(x_\iota)$ is true, or by one of its modifications if $\overline\nu(x_j)$ is false. 
The only edges that may not be covered are adjacent to some of the vertices from the set $\{b_1, \ldots, b_{(3n)^2+1}\}$. It suffices, therefore, to add to $T_\varphi$ paths constructed from $p$ by replacing a second vertex by one from $\{b_1, \ldots, b_{(3n)^2+1}\}$ unused so far. After doing so for each vertex that went unused, $T_\varphi$ achieves EC for $G$ and still respects $C$ (since $p$ does not contain any pairs of vertices restricted by $C$).

We will now prove that if there exists a test set $T_\varphi$ achieving EC for $G$ respecting $C$, then $\varphi$ is satisfiable. Since there are $(3n)^2+1$ edges outgoing from $s$, $T_\varphi$ must contain at least as many paths. Further, since there are $3n$ vertices that correspond to a variable or its negation (i.e. those of the form $v_{l,m}$, $l\in \{1, \ldots, n\}$, $m\in \{1, 2, 3\}$), there are at most $(3n)^2$ pairs in $C$. Therefore, since there are more paths in $T_\varphi$ than the number of pairs restricted by $C$ (with each pair appearing in exactly one path from $T_\varphi$), there must exist at least one path, let us denote it by $t_0$, which is in $T_\varphi$ but does not contain any pairs of vertices restricted by $C$. Denote the sequence of numbers from 1 to 3 that correspond to the path $t_0$ taken in each gadget corresponding to each $\varphi_i$ by $(k_1,\ldots,k_n)$. Define $\nu: X \rightarrow \{0, 1\}$ by setting values for the variables in $X$ such that each $x_{i,k_i}$ is true (note that it does not lead to contradictions thanks to $t_0$ not containing \textit{any} pairs of vertices restricted by $C$). For any remaining variable $x$, set $\nu(x)$ to any value. It is easy to see that $\nu$ satisfies $\varphi$.
\end{proof}

\subsection{ALWAYS}
 For the next result, we need a modification of the 3 SAT problem, namely POSITIVE 1-in-3 SAT (which is NP-complete \cite{DBLP:books/fm/GareyJ79}), that asks, for a given 3 SAT formula with only non-negated literals, if such a formula has a truth assignment such that each clause has exactly one true literal. We also make a technical assumption that each literal appears in at least one clause.

\begin{theorem}
\label{thm:complexity_always_relaxed}
Let $G=(V, E, s, t, C, C_=)$ be an acyclic cCFG with a set of ALWAYS constraints. Deciding whether a test set achieving EC for $G$ respecting $C$ of type $C_=$ exists is NP--complete.
\end{theorem}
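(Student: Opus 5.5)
The plan is to reduce POSITIVE 1-in-3 SAT to the stated problem. Membership in NP was already observed at the start of this section. Let $\varphi=\varphi_1\wedge\ldots\wedge\varphi_n$ be an instance over $x_1,\ldots,x_k$, where $\varphi_i = x_{i,1}\vee x_{i,2}\vee x_{i,3}$ and every variable occurs in some clause. I would build an acyclic cCFG made of three consecutive sections.

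\emph{First variable section.} For each $j$ there is a branch vertex $u_j$ with two routes to $u_{j+1}$: a ``true'' route through a vertex $T_j$, and a ``false'' route that is a plain edge $(u_j,u_{j+1})$. Making the false route an edge rather than a vertex means that choosing false triggers no constraint.

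\emph{Clause section.} For each $i$ this is the gadget of Fig.~\ref{gadgets}(a): $v_i$ branches to three literal vertices $c_{i,1},c_{i,2},c_{i,3}$, which merge at $w_i$. The gadgets are chained in sequence as in Fig.~\ref{gadgets}(c).

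\emph{Second variable section.} This is a copy of the first section, with vertices $u'_j$ and $T'_j$. It ends in $t$.

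The ALWAYS constraints are of three kinds:
\begin{itemize}
\item $(T_j,c_{i,m})$ whenever $x_{i,m}=x_j$. A true variable then forces its literal to be picked in every clause containing it; two true variables in one clause are impossible, since a path picks only one literal per clause.
\item $(c_{i,m},T'_j)$ whenever $x_{i,m}=x_j$. A picked literal is then certified true in the second section.
\item $(T_j,T'_j)$, together with constraints ensuring that $T'_j$ cannot occur unless the path ``committed'' to $x_j$ earlier.
\end{itemize}
The intended invariant is that every full test path through the clause section reads off a consistent assignment in which each clause has exactly one true literal.

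For the forward direction, fix a 1-in-3 satisfying assignment $\nu$. The path taking $T_j$ exactly for the true variables and picking in each clause its unique true literal satisfies all constraints. The remaining edges are the false edges and the true vertices not used by $\nu$ (those of variables false under $\nu$), and they must be covered by auxiliary paths that touch no constrained vertex they cannot complete. For this I would add bypass edges: from $s$ into each $u'_j$, and from $u_{k+1}$ (the end of the first section) directly to $t$. The paths $s,u_1,\ldots,u_{k+1},t$ and $s,u'_1,\ldots,t$ that take only false edges then cover all false routes without triggering any constraint.

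For the $T_j$ of a variable false under $\nu$, I would instead give each $T_j$ and each $T'_j$ its own escape: an edge to $t$ from $T'_j$, and, for $T_j$, a route that jumps to the clause vertices it requires. This is the delicate part. The escape routes must cover everything while never letting a path through the clause section avoid the consistency constraints. The only solution-encoding path I need is one that traverses the clause section entirely, and I would argue that any such path must encode a 1-in-3 solution.

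For the backward direction, some test path must contain the edges $(w_i,v_{i+1})$ of the clause chain. That path visits each clause gadget and picks one literal per clause. By the second kind of constraint each picked literal is certified true in the second section. By the first and third kinds, together with acyclicity and the absence of a second opportunity to visit $T_j$, the set of certified-true variables picks exactly one literal per clause. Setting those variables to true and all others to false then yields a 1-in-3 assignment.

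The main obstacle I expect is the tension between edge coverage, which forces every vertex, including $T_j$ of variables false in the solution, to appear in some admissible path, and the requirement that ALWAYS constraints, which only look forward, enforce two-sided consistency of the encoding. The resolution I would try first is the arrangement above: a constraint-free bypass route for every variable vertex, and consistency enforced only along the unique kind of path forced to traverse the whole clause chain. I would verify carefully that no bypass lets a clause-chain path escape the constraints.
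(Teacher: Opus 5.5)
There is a genuine gap, located exactly at the obstacle you flag. Your third kind of constraint (``$T'_j$ cannot occur unless the path committed to $x_j$ earlier'') is backward-looking, so it is not an ALWAYS constraint. In your graph it also cannot be simulated by ALWAYS constraints. A constraint $(a,b)$ is violated only when $a$ is present and no $b$ follows. Since the false route at $u_j$ is a bare edge, not taking $T_j$ leaves no vertex that could trigger any obligation, so nothing downstream can depend on that choice.

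With the constraints you actually list, consider the path that takes every false edge in the first section, picks an arbitrary literal $c_{i,a_i}$ in every clause, and takes $T'_j$ exactly for the picked variables. It is admissible: kinds 1 and 3 are vacuous because no $T_j$ occurs, and kind 2 is satisfied. It also covers the whole clause chain, and this works for \emph{every} formula. For example, with $\varphi_1=a\vee b\vee c$ and $\varphi_2=a\vee d\vee e$, picking $a$ and then $d$ certifies two true variables in $\varphi_2$. So the backward direction fails, and the first variable section does no work.

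Two further steps are also unsupported. First, once escape routes from the $T_j$ enter the clause section, nothing forces a \emph{single} test path to contain all chain edges $(w_i,v_{i+1})$. Second, the forward-direction coverage of the unused $T_j$ and of the literal edges is left open.

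The missing idea, which the paper uses, is to drop the upstream section and put a single variable chain \emph{after} the clause gadgets. This chain has a vertex on both branches, $x_{j,t}$ and $x_{j,f}$. Each literal vertex $v_{i,a}$ gets three ALWAYS constraints: one to the true vertex of its own variable, and one to each false vertex of the other two variables of $\varphi_i$. Every path through the chain visits exactly one of $x_{j,t},x_{j,f}$, so the assignment read off the chain is automatically consistent across clauses. The three constraints then give exactly one true literal per clause, so only forward constraints are needed.

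The paper then forces the path through $(s,z_1)$ to visit every gadget, using constraints $(z_i,z_{i+1})$ on extra vertices $z_i$. It covers all other edges with short paths $(s,v_i,v_{i,a},w_i,y_1,\ldots,t)$ that set only the variable of $x_{i,a}$ true; these are always admissible.
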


\begin{proof}
Assume we have a formula $\varphi$ with clauses $\varphi_1, \ldots, \varphi_n$ and literals $X = \{x_1, \ldots, x_k\}$ with $|\varphi_i \cap \{x_1, \ldots, x_k \}| = 3$ . Denote $\varphi_i = x_{i,1} \lor x_{i,2} \lor x_{i,3}$. Let us consider $G=(V, E_{blue} \cup E_{red} \cup E_{black} \cup E_{green}, s, t, C, C_=)$, a cCFG such that:
\begin{itemize}
    \item $V=\{s, t, z_{n+1}, y_1,\ldots,y_k\} \cup \{v_{i, 1}, v_{i, 2}, v_{i, 3}, v_i, z_i, w_i, x_{i, t}, x_{i, f}\ |\ i \in \{1,\ldots,n\}\}$,
    \item $E_{blue} = \{(s,z_1), (z_{n+1}, y_1)\} \cup \{(w_1, z_2),(w_2, z_3) \ldots (w_n, z_{n+1}) \} \cup \{(z_1,v_1), \ldots, (z_n, v_n)\}$,
    \item $E_{red} = \{(s, v_i), (w_i, y_1)\ |\ i \in \{1,\ldots,n\}\},$
    %\item $E_2 = \{(s, v_i)\ |\ 1 \leq i \leq n \} \cup \{(w_i, y_1) |  1 \leq i \leq n \}$,
    \item $E_{black} = \{(v_j, v_{i,j}),  (v_{i,j}, w_i)\ |\ i \in \{1,\ldots,n\}, j \in \{1, 2, 3\}\}$
    \item $E_{green} = \{(y_i,x_{i,t}),(y_i,x_{i,f})\ |\ 1 \leq i \leq k\} \cup \{(x_{i,t}, y_{i+1}),(x_{i,f}, y_{i+1})\ |\ 1 \leq i < k \}\\$
    $\cup \{(x_{k,t}, t),(x_{k,f}, t)\}$ .
\end{itemize}

It remains to define a suitable set of constraints for $G$. Let $$C = Z \cup  \left(\bigcup_{1 \leq j \leq n} X_{\varphi_j}\right),$$ where 

\begin{itemize}
    \item $Z = \{(z_i,z_{i+1})\ |\ 1 \leq i \leq n \}$,
    %\item $X_{\varphi_j} = \{(v_{j,k}, x'_m)\ |\ k, m \in \{1, 2, 3\}\}$ for $x'_1, x'_2, x'_3 \in \varphi_j$ (% \textcolor{blue}{(JR) Wydaje mi sie, ze ta wersja jest niepoprawna - musi byc to co na dole})
    \item $X_{\varphi_j} = \{(v_{j,1}, x'_{1,t}),(v_{j,1}, x'_{2,f}),(v_{j,1}, x'_{3,f})\} \cup \{(v_{j,2}, x'_{1,f}),(v_{j,2}, x'_{2,t}),(v_{j,2}, x'_{3,f})\}$ \\$ \cup \{(v_{j,3}, x'_{1,f}),(v_{j,3}, x'_{2,f}),(v_{j,3}, x'_{3,t})\}$ for $x'_1, x'_2,x'_3 \in \varphi_j$.
\end{itemize}

Intuitively, constraints from $Z$ ensure that all gadgets corresponding to clauses must be on some path $p$ and constraints from $X_{\varphi_j}$ ensure that, if $x_i$ is chosen in some gadget corresponding to $\varphi_j$, then, if variables $x_i', x_i''$ are in $\varphi_j$, vertices $x_{i,t}, x_{i,f}', x_{i,f}''$ must be on $p$ and if $v_{j,i}$ (corresponding to a literal $x_m$) is on $p$, then in all clause gadgets that include the vertex $v_{j',i'}$ (corresponding to a literal $x_m$) is also on $p$. 

\begin{figure}[!ht]
\centering
\includegraphics[scale=0.48]{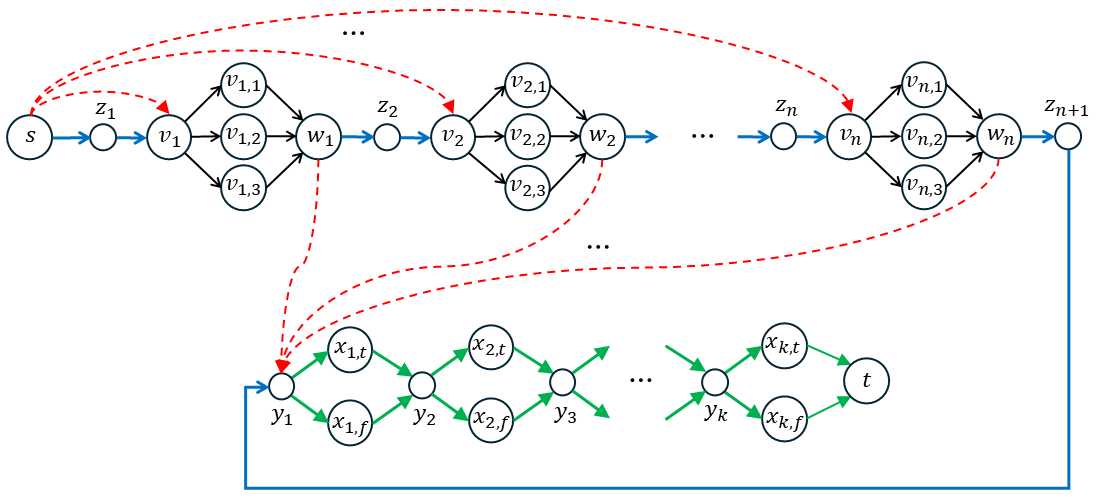}
\caption{Gadget used in Theorem \ref{thm:complexity_always_relaxed}}
\label{gadgets2}
\end{figure}

Obviously $G_\varphi$ is acyclic for every $\varphi$ (see Fig. \ref{gadgets2}).
We state and prove two claims:
\begin{lemma}
\label{lemma:cover_e2_e3_e4}
For every $\varphi$ there exists a set of paths $T$ from $s$ to $t$ in $G_{\varphi}$, respecting constraints from $C$, such that $E(T) = E_{red} \cup E_{black} \cup E_{green}$.
\end{lemma}

\begin{proof}
Consider clause $\varphi_i$ from $\varphi$ and let $x_j \in \varphi_i$. Assume that $x_j = x_{i,a}$ for some $a \in \{1,2,3\}$: 

\begin{itemize}
    \item begin $p_{i,a}$ with $(s,v_i, v_{i,a}, w_{i}, y_1)$ (edges from $E_2$ and $E_3$),
    \item append $(x_{1,f}, y_2, \ldots, x_{j-1,f}, y_{j}, x_{j,t}, y_{j+1}, x_{j+1,f}, \ldots, x_{n,f}, t)$ to $p_{i,a}$.
\end{itemize}

Obviously $p_{i,a}$ contains no $z_j$, so it needs only to respect constraints, $(v_{i,a}, x_{j,t})$, $(v_{i,a}, x'_{j,f})$ and $(v_{i,a}, x''_{j,f})$ for $x_j, x'_j, x''_j \in \varphi_i$. But, since $p_{i,a}$ contains only vertices corresponding to literal $x_j$ set to true (with the other two literals from the clause $\varphi_i$ set to false), all three constraints are fulfilled.

Let $T = \{p_{1,1}, p_{1,2}, p_{1,3}, \ldots, p_{n,1}, p_{n,2}, p_{n,3}\}$. Each $p_{i,a}$ corresponds to one literal of one clause of $\varphi$. It is easy to see, that all edges from $E_{black}$ are covered by $p_{i,a}$, since those are edges of form $(v_i, v_{i, a})$ or $(v_{i, a}, w_i)$, and each $x_{i,a}$ is covered by $p_{i,a}$. Further, edges from $E_{red}$ are also covered by $T$, since $(s,v_i)$ and $(w_i, y_1)$ are a part of $p_{i,a}$ for $i \in \{1, \ldots, n\}$ and $a \in \{1,2,3\}$. To show that all edges from $E_{green}$ are covered, observe that in each $p_{i,a}$ there is only one vertex $x_{j,t}$ on that path, so each $p_{i,a}$ covers edges $(a_1,x_{j,t})$ and $(x_{j,t},a_2)$ and $2n-2$ of form $(a_3,x_{j',f})$ and $(x_{j',f},a_4)$ and each $x_{j,t}$ must be at some $p_{i,a}$ -- otherwise there would be a literal that belongs to no clause. 
\end{proof}

\begin{lemma}
\label{lemma:assign_always_equiv}
The formula $\varphi$ has a truth assignment with exactly one true literal in every clause if and only if there exists a path $p$ in $G_{\varphi}$ from $s$ to $t$, respecting constraints from $C$, such that $E_{blue} \subset E(p)$.
\end{lemma}

\begin{proof}
Assume that $\varphi$ has a desired evaluation $\nu: \{x_1, \ldots, x_k\} \rightarrow \{0,1\}$. Let $\{x_{i_1}, \ldots , x_{i_l}\}$ be all literals, such that $\nu(x_{i_j}) = 1$. We construct desired path $p_{\nu}$ as follows:

\begin{itemize}
    \item $p_{\nu}$ starts with $s$, 
    \item for $j = 1$ to $n$ append to $p_{\nu}$ a path $(z_j, v_j, v_{j, a}, w_{j})$ such that $\nu(x_{j,a}) = 1$ (since $\nu$ is a truth assignment with one true literal per clause, then only one $x_{j,a}$ exists for every $\varphi_j$),
    \item append $(z_{m+1}, y_1)$ to $p_{\nu}$,
    \item append $(x_{1, b_1},y_2, \ldots, x_{k-1, b_{k-1}}, y_k, x_{k, b_{k}}, t)$ to $p_{\nu}$ where $b_i = t$ if and only if $\nu(x_i) = 1$.
\end{itemize}

Obviously, $p$ goes through all edges from $E_{blue}$ and fulfills all constraints of form $(z_j, z_{j+1})$. To show that constraints from $X_{\varphi_j}$ hold for every $1 \leq j \leq n$ observe that vertex $v_{j, a}$ that is on $p_\nu$ from clause $\varphi_j$ gadget imply three constraints from $X_{\varphi_j}$ that must be fulfilled - one corresponds to a true assignment of $x_{j, a}$ and the two others correspond to a false assignment of two other literals of $\varphi_j$. But, since $\nu$ is an evaluation that sets true to exactly one literal of every clause, every ``active'' condition of $X_{\varphi_j}$ holds for every $j$. 

Now, assume that there exists $p$ that covers $E_{blue}$ and respects constraints from $C$. Each path from $s$ to $t$ must contain $x_{i,t}$ or $x_{i,f}$, but not both of them at once, for all $i \in \{1, \ldots, k\}$, so we construct $\nu_p$ as follows:

\begin{itemize}
    \item set $\nu_p(x_i) = 1$ if  $p$ contains $x_{i,t}$,
    \item set $\nu_p(x_i) = 0$ if $p$ contains $x_{i,f}$.
\end{itemize}

Path $p$ must start with $(s, z_1)$. Since $p$ respects all constraints from $C$, then it hits all $n$ gadgets corresponding to clauses (otherwise it would be a pair $(z_i, z_{i+1})$ such that $z_i$ is on $p$ but $z_{i+1}$ is not on $p$). For the sake of contradiction, suppose that $p$ respects all constraints, but $\nu_p$ is not a truth assignment with exactly one true literal in each clause. Further consider two cases: 

Case 1 --  there exists $\varphi_j$ such that no literal from $\varphi_j$ is evaluated to true in $\nu_p$. Since $p$ hits all clause gadgets, it must also contain $v_{j, a}$. But then we have a constraint $(v_{j,a}, x_{i,t}) \in X_{\varphi_j}$ to fulfill, and $\nu_p(x_i) = 0$  -- so we have a contradiction, since $p$ does not respect all constraints.

Case 2 -- there exists $\varphi_j$ such that two or three literals are evaluated to true in $\nu_p$. Then there exist $x_i, x_{i'} \in \varphi_j$ such that $\nu_p(x_i) = \nu_p(x_{i'}) = 1$. But, since $p$ respects all constraints, it must also respect both $(v_{j,a}, x_{i,t})$ and $(v_{j,a}, x_{i,f})$ from $X_{\varphi_j}$. But that would mean that one of those conditions is violated in $p$, yielding a contradiction.  \end{proof}

To finish the proof of Theorem \ref{thm:complexity_always_relaxed} it suffices to observe, that every path that covers edge $(s,z_1)$ must also traverse through edges $(w_i,z_{i+1}), (z_i, v_i)$ and $(w_n, z_{n+1})$ (see constraints from $Z$). So we can cover $(s,z_1)$ with respect to constraints from $C$ if and only if there exists a path $p$ that goes through every $z_i$ at once. From the former and Lemmas \ref{lemma:cover_e2_e3_e4} and \ref{lemma:assign_always_equiv}, we deduce that Theorem \ref{thm:complexity_always_relaxed} holds.
\end{proof}

\section{FPT Algorithm for NEGATIVE}

We present here an algorithm that, for a given cCFG $G=(V, E, s, t, C, C_-)$ %% \textcolor{blue}{cCFG $G=(V, E, s, t, C_=, C_-)$ - naduzycie notacji, ale mozna na szybko poprawic w definicji jesli wszystkow tweirdzeniach sie zgadza}
,decides if there exists a test set that achieves EC for $G$ respecting $C$ of type NEGATIVE %% \textcolor{blue}{and POSITIVE mixed}
. The algorithm is Fixed Parameter Tractable (FPT) with respect to the number of constraints. 

Let $d = (d_1, ..., d_n)$ be a sequence. We write: $x \in d$ if $x = d_i$ for some $i \in \{1, \ldots, n\}$; $d^v = \{x: x \in d\}$; $\mathrm{last}(x, d) = \max\{i: d_i = x\}$ if $x \in d$ and 0 otherwise; $\mathrm{first}(x, d) = \min\{i: d_i = x\}$ if $x \in d$ and 0 otherwise. We also put $C^v = \bigcup_{(x,y) \in C} \{x,y\}$ %% \textcolor{blue}{$C^v = \bigcup_{(x,y) \in C_- \cup C_=} \{x,y\}$}
.

Let $C!$ be a set of all sequences $c=(c_1, \ldots, c_n)$ over $C^v$ in which every element from $C^v$ appears at most once. Let $I(x, c) = i$ if $c_i=x$, and 0 if $x \not \in c$.
Let $$\overline{C!} = \{c \in C! : \forall_{(x,y) \in C}\ (x \in c \wedge y \in c \wedge I(y,c) < I(x,c)) \lor x \not\in c \lor y  \not\in c\}.$$ 
% \textcolor{blue}{Let $\overline{C!} = \{c \in C! : \forall_{(x,y) \in C_{-}}\ (x \in c \wedge y \in c \wedge I(y,c) < I(x,c)) \lor x \not\in c \lor y  \not\in c \land \\ \forall_{(x,y) \in C_{=}}\ (x \in c \wedge y \in c \wedge I(x,c) < I(y,c)) \lor x  \not\in c\}.$}
For $c \in \overline{C}!$ and $p \in P(G)$, we say that $p$ is $c$-\textit{proper} if:

\begin{enumerate}
    \item \label{def:c_proper:cond1} $p^v \cap C^v = c^v$, 
    \item \label{def:c_proper:cond2} $I(v_1,c) < I(v_2,c) \Rightarrow \mathrm{last}(v_1, p) < \mathrm{last}(v_2, p)$,
    \item \label{def:c_proper:cond3} $(x,y) \in C$ % \textcolor{blue}{$(x,y) \in C_{-}$} 
    $\wedge x, y \in c \Rightarrow \mathrm{last}(y, p) < \mathrm{first}(x, p)$.
\end{enumerate}

Recall that $p \in P(G)$ respects a negative constraint $(x, y) \in C_-$ if $p \neq p_1.x.p_2.y.p_3$ for any $p_1, p_2, p_3 \in P(G)$.
% \textcolor{blue}{\begin{observation}\label{obs:cond1:always_path_resp}Let $p \in P(G)$ and $C_{=}$ be a set of ALWAYS constraints. The path $p$ respects all constraints from $C_{=}$ if and only if $\forall_{(x,y) \in C_{=}}\;x \notin p \lor (x \in p \land y\in p \land \mathrm{last}(x, p) < \mathrm{last}(y, p))$. \end{observation}\begin{proof}Chyba jest to dosc oczywiste jak sie zerknie na definicje tego warunku.\end{proof}}
Finally, for $c \in \overline{C!}$ and $y \in c$ define $\lambda_c(y) = \{ x \in c : (x,y) \in C\}$ % \textcolor{blue}{$\lambda_c(y) = \{ x \in c : (x,y) \in C_{-}\}$} 
and for $Y \subset c^v$ define $\lambda_c(Y) = \bigcup_{y \in Y} \lambda_c(y)$. We start with a simple observation.

\begin{observation}
    \label{obs:negative_path_resp}
    Let $G=(V, E, s, t, C, C_-)$ be a cCFG with a set of NEGATIVE constraints. A path $p$  respects all constraints from $C$ of type $C_-$ if and only if for every $(x,y) \in C$:
    % \textcolor{blue}{Let $p \in P(G)$ and $C_-$ be a set of NEGATIVE constraints. A path $p$  respects all constraints from $C_-$ if and only if for every $(x,y) \in C_-$:}
    
    \begin{enumerate}
        \item \label{obs:cond1:negative_path_resp} $x \notin p$ and $y \notin p$ or
        \item \label{obs:cond2:negative_path_resp}$x \in p$ and $y \notin p$ or 
        \item \label{obs:cond3:negative_path_resp}$x \notin p$ and $y \in p$ or 
        \item \label{obs:cond4:negative_path_resp}$p = p_1.y.p_2.x.p_3$ and $x \notin p_1$ and $x,y \notin p_2$ and $y\notin p_3$.
    \end{enumerate}
\end{observation}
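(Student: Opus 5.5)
The statement only re-expresses the definition of a $C_-$ constraint, so the plan is to prove it constraint by constraint. Fix a pair $(x,y)\in C$. By the definition of admissibility for $X=C_-$, a path $p$ respects $(x,y)$ exactly when $p$ has no decomposition $p=p_1.x.p_2.y.p_3$. Equivalently, there is no occurrence of $x$ at some position $i$ together with an occurrence of $y$ at some position $j>i$. In the notation introduced just before the observation, this says: either one of $x,y$ is absent from $p$, or both are present and $\mathrm{last}(y,p)<\mathrm{first}(x,p)$. Because $x\neq y$ for every meaningful constraint (the positions are strict), comparing the extreme occurrences decides the matter. Once this reformulation is established, each direction of the equivalence reduces to checking that the four listed conditions together describe exactly this set of paths.

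For the direction from respecting the constraint to the conditions, I will split according to whether $x\in p$ and whether $y\in p$. Three of the four combinations are literally conditions \ref{obs:cond1:negative_path_resp}--\ref{obs:cond3:negative_path_resp}. In the remaining case both vertices occur, and I will cut $p$ at positions $\mathrm{last}(y,p)$ and $\mathrm{first}(x,p)$. Write $p=p_1.y.p_2.x.p_3$, where the displayed $y$ is its last occurrence and the displayed $x$ is its first occurrence. This decomposition is legitimate because $\mathrm{last}(y,p)<\mathrm{first}(x,p)$. Then:
\begin{itemize}
\item $y\notin p_3$ and $y\notin p_2$, because the displayed $y$ is the last one;
\item $x\notin p_1$ and $x\notin p_2$, because the displayed $x$ is the first one.
\end{itemize}
This gives condition \ref{obs:cond4:negative_path_resp}.

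For the converse, conditions \ref{obs:cond1:negative_path_resp}--\ref{obs:cond3:negative_path_resp} immediately rule out a decomposition $p_1.x.p_2.y.p_3$, since such a decomposition needs both vertices. Under condition \ref{obs:cond4:negative_path_resp}, every occurrence of $x$ lies after the displayed $y$, and every occurrence of $y$ lies at or before it. Hence no $x$ precedes a $y$, and $(x,y)$ is respected. Applying this to every pair in $C$ yields the observation.

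The only mildly delicate step is the bookkeeping with the subpaths. Some of $p_1,p_2,p_3$ may be empty, while the paper's definition of $P(G)$ requires paths of length at least one vertex. I will handle this by reading the decomposition as a statement about vertex positions in the sequence, rather than literally as a concatenation of paths. That reading is clearly what the definition of $C_-$ intends.
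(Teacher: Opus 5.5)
Your argument is correct and is exactly the unfolding of the definition of $C_-$ that the paper leaves implicit: it states this as an observation with no proof. Like that implicit argument, you split on which of $x,y$ occur on $p$, and in the case where both occur you cut $p$ at $\mathrm{last}(y,p)$ and $\mathrm{first}(x,p)$. Your two side assumptions are necessary, not cosmetic: for $x=y$ the statement fails (a path containing $x$ exactly once respects $(x,x)$ but meets none of the four conditions), and under the literal reading with nonempty $p_1,p_2,p_3$ the path $(s,x,y,t)$ would respect $(x,y)$ while meeting none of the four conditions, so your positional reading is the one under which the observation holds.
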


\begin{lemma}
\label{lemma:constr_proper_equivalence}
    A path $p$ respects all constraints from $C$ of type $C_-$, if and only if there exists $c \in \overline{C}!$ such that $p$ is $c$-proper.
    % \textcolor{blue}{A path $p$ respects all constraints from $C_-$ and $C_=$, if and only if there exists $c \in \overline{C}!$ such that $p$ is $c$-proper.}
\end{lemma}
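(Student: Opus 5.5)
The plan is to prove the two implications separately. In the forward direction we build $c$ explicitly from $p$, ordering by last occurrence. In the backward direction we read the constraints off Condition~\ref{def:c_proper:cond3}. The main tool is Observation~\ref{obs:negative_path_resp}, whose fourth case says that when both $x$ and $y$ occur on a respecting path, every occurrence of $y$ comes strictly before every occurrence of $x$. Equivalently, $\mathrm{last}(y,p)<\mathrm{first}(x,p)$.

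($\Rightarrow$) Let $p$ respect all constraints from $C$. Let $c$ list the elements of $p^v\cap C^v$, each exactly once, in increasing order of $\mathrm{last}(\cdot,p)$. The values $\mathrm{last}(v,p)$ for distinct $v$ are distinct positions, so this order is well defined and $c\in C!$. Conditions~\ref{def:c_proper:cond1} and~\ref{def:c_proper:cond2} of $c$-properness then hold by construction.

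Now take $(x,y)\in C$ with both $x,y\in c$, so both occur on $p$. Only case~\ref{obs:cond4:negative_path_resp} of Observation~\ref{obs:negative_path_resp} applies, so $p=p_1.y.p_2.x.p_3$ with $x\notin p_1,p_2$ and $y\notin p_2,p_3$. Hence all occurrences of $y$ lie in $p_1.y$ and all occurrences of $x$ lie in $x.p_3$. This gives $\mathrm{last}(y,p)<\mathrm{first}(x,p)\le \mathrm{last}(x,p)$. The first inequality is Condition~\ref{def:c_proper:cond3}, and the whole chain gives $I(y,c)<I(x,c)$. So $c\in\overline{C!}$ and $p$ is $c$-proper.

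($\Leftarrow$) Let $p$ be $c$-proper for some $c\in\overline{C!}$, and fix $(x,y)\in C$. If $x\notin p$ or $y\notin p$, one of cases~\ref{obs:cond1:negative_path_resp}--\ref{obs:cond3:negative_path_resp} of Observation~\ref{obs:negative_path_resp} holds and we are done. Otherwise both lie in $p^v\cap C^v=c^v$ by Condition~\ref{def:c_proper:cond1}, so Condition~\ref{def:c_proper:cond3} gives $\mathrm{last}(y,p)<\mathrm{first}(x,p)$. Suppose $p=p_1.x.p_2.y.p_3$, and let $i<j$ be the positions of these displayed $x$ and $y$. Then $\mathrm{first}(x,p)\le i<j\le \mathrm{last}(y,p)$, a contradiction. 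So $p$ respects $(x,y)$.

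The main obstacle is bookkeeping rather than ideas. First, we must translate the concatenation-based definition into statements about positions; this is essentially what Observation~\ref{obs:negative_path_resp} does, and we take it as given. Second, there is the degenerate constraint $(x,x)$. For such a constraint, $\overline{C!}$ forbids $x\in c$, which is consistent only if respecting $(x,x)$ forces $x\notin p$. Under the paper's convention that $p_2$ is a nonempty path, a single occurrence of $x$ would actually be allowed. I would therefore state explicitly that constraints satisfy $x\neq y$, or that such pairs are excluded from $C$, and argue under that assumption.
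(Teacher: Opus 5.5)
Your proof is correct and follows the paper's argument essentially step for step: for ($\Rightarrow$) you order $p^v\cap C^v$ by last occurrence and use case~\ref{obs:cond4:negative_path_resp} of Observation~\ref{obs:negative_path_resp}, and for ($\Leftarrow$) you read the constraint off Condition~\ref{def:c_proper:cond3}, where your direct positional contradiction is a slightly more self-contained replacement for the paper's appeal back to the Observation. Your caveat about $(x,x)$ identifies a real gap in the paper, since both the lemma's ($\Rightarrow$) direction and Observation~\ref{obs:negative_path_resp} fail for such pairs; however, the qualifier about nonempty $p_2$ is unnecessary, because one occurrence of $x$ never matches $p_1.x.p_2.x.p_3$ under either convention, and a literal nonempty-$p_1,p_2,p_3$ reading would already break the Observation for $x\neq y$ (e.g.\ when $x$ is immediately followed by $y$), so the Observation has to be read with possibly empty subpaths anyway.
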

\begin{proof}
$(\Rightarrow)$ Assume that $p$ respects all constraints from $C$ of type $C_-$ % \textcolor{blue}{ from $C_-$ and $C_=$}
. Let $D = p^v \cap C^v$. We construct $c$ by ordering the vertices of $D$ according to their order of \emph{last occurrence} on $p$.
No vertex appears more than once in this ordering, hence $c \in C!$.
We now show that $c \in \overline{C!}$.
Consider any constraint $(x,y)\in C$ % \textcolor{blue}{ $(x,y)\in C_-$}
.
If both $x,y\in c$, then by Observation~\ref{obs:negative_path_resp}, $p$ must be of the form
$p=p_1.y.p_2.x.p_3$.
Hence $\mathrm{last}(y, p) < \mathrm{first}(x, p) \leq \mathrm{last}(x, p)$. Otherwise, $x\not\in c \lor y\not\in c$ holds. % \textcolor{blue}{Consider any constraint $(x,y) \in C_=$. If both $x,y \in c$, then by Observation \ref{obs:cond1:always_path_resp}, we know that Condition \ref{def:c_proper:cond2}  hold for $c$. Otherwise, $x \notin p$ and $y \in p$}. 
Thus $c\in\overline{C!}$. 
It remains to show that $p$ is $c$-proper.
By construction, $p$ contains no vertices from $C^v \setminus c^v$. Moreover, also by construction, whenever $I(v_1,c)<I(v_2,c)$, $\mathrm{last}(v_1, p) < \mathrm{last}(v_2, p)$. Finally, if there exists $(x,y) \in C$ % \textcolor{blue}{ $(x,y)\in C_-$} 
such that $x \in p$ and $y\in p$, then Condition \ref{obs:cond4:negative_path_resp} of
Observation~\ref{obs:negative_path_resp} must hold. So $x,y \in c$, but also $\mathrm{first}(x, p) > \mathrm{last}(y, p)$ (otherwise $p$ would not respect the $(x,y)$ constraint). Therefore, $p$ is $c$-proper.

$(\Leftarrow)$ Assume that there exists $c\in\overline{C!}$ such that $p$ is $c$-proper.
Consider any constraint $(x,y)\in C$ % \textcolor{blue}{$(x,y)\in C_-$}
.
\begin{itemize}
    \item If neither $x$ nor $y$ belongs to $c$, then neither appears on $p$, and the constraint is respected.
    \item If exactly one of ${x,y}$ belongs to $c$, then by $c$-properness the other does not occur on $p$, and the constraint is respected.
    \item If both $x,y\in c$, then since $c\in\overline{C!}$ we have $I(y,c)<I(x,c)$.
\end{itemize}
By $c$-properness, $\mathrm{last}(y, p) < \mathrm{first}(x, p)$, so $p$ is of the form $p=p_1.y.p_2.x.p_3$ for some $p_1, p_2, p_3 \in P(G)$ and $p$ satisfies Condition~\ref{obs:cond4:negative_path_resp} of Observation~\ref{obs:negative_path_resp}. In all cases, the constraint $(x,y)$ is respected by $p$. Since $(x,y)$ was arbitrary, $p$ respects all constraints from $C$ % \textcolor{blue}{$C_-$}
.
% \textcolor{blue}{
%Consider any constraint $(x,y)\in C_=$.
%\begin{itemize}
%    \item If $x$ does not belong to $c$, then it does not appear on $p$, and the constraint is respected.
%    \item If both $x,y\in c$, then since $c\in\overline{C!}$ we have $I(x,c)<I(y,c)$ but then, from Condition \ref{def:c_proper:cond2}, $\mathrm{last}(x, p) < \mathrm{last}(y, p)$ and, by Observation \ref{obs:cond1:always_path_resp}, the constraint is respected.
%\end{itemize}
%}
\end{proof}
We state another observation before presenting the algorithm. 
\begin{observation}
\label{obs:left_parts_of_constraints}
    Let $c \in \overline{C}!$ and $D = c^v \setminus \lambda_c(c^v)$. Then  $\forall y' \in D\;\forall (x,y') \in C % \textcolor{blue}{(x,y') \in C_-}
    : x \notin D$.
\end{observation}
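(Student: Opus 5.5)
The plan is to argue by contradiction, using only the definition of $\overline{C!}$ and of $\lambda_c$. Fix $c \in \overline{C!}$ and put $D = c^v \setminus \lambda_c(c^v)$. Take any $y' \in D$ and any constraint $(x,y') \in C$, and suppose, towards a contradiction, that $x \in D$.

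The argument has three steps.

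\begin{enumerate}
\item Since $x \in D \subseteq c^v$ and $y' \in D \subseteq c^v$, both endpoints of the constraint $(x,y')$ lie in $c$.
\item By the definition of $\lambda_c$, we have $\lambda_c(y') = \{ z \in c : (z,y') \in C\}$. Because $x \in c$ and $(x,y') \in C$, this gives $x \in \lambda_c(y')$.
\item Since $y' \in c^v$, it follows that $\lambda_c(y') \subseteq \lambda_c(c^v) = \bigcup_{y \in c^v} \lambda_c(y)$, so $x \in \lambda_c(c^v)$. This contradicts $x \in D = c^v \setminus \lambda_c(c^v)$.
\end{enumerate}

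Hence $x \notin D$, as claimed.

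The argument does not actually use the ordering condition defining $\overline{C!}$; the statement holds for every $c \in C!$. I mention this so the reader does not look for a role for the condition $I(y,c) < I(x,c)$ here.

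The only step needing care is the case split in step 1. If $x \notin c$, then $x \notin D$ holds trivially, so we may assume $x \in c$. That assumption is exactly what step 2 needs in order to place $x$ in $\lambda_c(y')$. There is no substantive obstacle; the observation is a direct unfolding of the definitions.

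In the algorithm, the observation will presumably be used to show that the vertices of $D$ can be visited by a path without violating any constraint among themselves: no constraint has both endpoints in $D$.
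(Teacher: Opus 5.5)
Your proof is correct and matches the paper. The paper states this observation without proof, as immediate from the definitions; your unfolding ($x \in c$, $(x,y') \in C$ and $y' \in c^v$ force $x \in \lambda_c(y') \subseteq \lambda_c(c^v)$) is exactly that reasoning, and your side remark holds as well: neither the ordering condition of $\overline{C!}$ nor $y' \in D$ (beyond $y' \in c^v$) is needed.
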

\begin{algorithm}[!ht]
\caption{\textsc{FindProperPath}}
    \begin{algorithmic}[1]
    \Require cCFG $G=(V, E,s,t,C,C_-)$ % \textcolor{blue}{$G=(V, E,s,t,C_=,C_-)$}
    , $c \in \overline{C}!$,  $(u,v) \in E$
    \Ensure $c$-proper path $p=(s, ..., u, v, ..., t)$, or $\varnothing$ if such $p$ does not exist 
    \State \label{alg:find_proper_path:line_init_p} $j := 1, p := (s)$
    \State $D := c^v \setminus \lambda_c(c^v)$
    \State \label{alg:find_proper_path:line_g'_def} $G' := $ induced subgraph of $G$ on $V(G) \setminus (C^v \setminus D)$ 
        \While {$j \leq |c| \wedge p \neq \varnothing$} \label{alg:find_proper_path:while_start}
            \State $p' := \varnothing$
            \If{$p$ does not contain $(u,v)$ as an edge}
                \State $p' :=$ path in $G'$ from the last vertex of $p$ to $c_j$ that goes through $(u,v)$ 
            \EndIf
            \If{$p' = \varnothing$}
                \State $p' :=$ the path from the last vertex of $p$ to $c_j$ in $G'$ 
            \EndIf
            \If{$p' = \varnothing$}
                \State \label{alg:find_proper_path:invalid_end} \Return $\varnothing$ 
            \Else
                \State $G' :=$ induced subgraph of $G$ on $V(G') \setminus \{c_j\} \cup (\lambda_c(c_j) \setminus \lambda_c(V(G') \setminus \{c_j\}))$ \label{alg:maininvarian}
                \State $p := p.p'$
            \EndIf
            \State $j := j+1$
        \EndWhile 
        \If {$p$ does not contain $(u,v)$} \label{alg:find_proper_path:end_p_if}
            \If{there exists a path $p'$ from the last vertex of $p$ to $t$ in $G'$ through $(u,v)$}
            \State $p := p.p'$ \label{alg:adduv}
            \Else
            \State $p := \varnothing$
            \EndIf
        \Else
            \If{there exists a path $p'$ from the last vertex of $p$ to $t$ in $G'$}
            \State $p := p.p'$
            \Else
            \State $p := \varnothing$
            \EndIf
        \EndIf
    \State \label{alg:find_proper_path:valid_end} \Return $p$ 
    \end{algorithmic}
\end{algorithm}

\begin{lemma}
\label{lemma:find_proper_path_correct}
The procedure \textsc{FindProperPath}, for a given cCFG $G=(V, E,s,t,C,C_-)$ % \textcolor{blue}{$G=(V, E,s,t,C_=,C_-)$}
, $c \in \overline{C}!$, and $(u,v) \in E$ returns, in polynomial time, a $c$-proper path $p$ from $s$ to $t$ containing $(u,v)$ if such $p$ exists, or $\varnothing$ otherwise.
\end{lemma}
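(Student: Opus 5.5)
The plan is to show that $c$-proper paths are exactly concatenations of walks through a fixed, predetermined sequence of subgraphs, one per element of $c$. Correctness of \textsc{FindProperPath} then reduces to reachability questions plus a greedy argument for where to place $(u,v)$. Note that paths in $P(G)$ may repeat vertices, so "path from $a$ to $b$ through $(u,v)$ in $G'$" just means $a\rightsquigarrow u$ and $v\rightsquigarrow b$ inside $G'$; this is two BFS calls.

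\emph{Step 1 (allowed sets).} Let $c=(c_1,\ldots,c_m)$. For $1\le j\le m+1$ define
\[
W_j=\bigl(V\setminus C^v\bigr)\cup\{c_i : i\ge j,\ \forall y\in c^v\,((c_i,y)\in C\Rightarrow I(y,c)<j)\}.
\]
I would prove by induction on $j$ that the graph $G'$ at the start of iteration $j$ (and, for $j=m+1$, after the loop) is the subgraph induced on $W_j$. The base case is line~\ref{alg:find_proper_path:line_g'_def}, since $D=c^v\setminus\lambda_c(c^v)$. The inductive step is line~\ref{alg:maininvarian}: it removes $c_j$ and releases those $x\in\lambda_c(c_j)$ that are no longer blocked by any remaining $y$. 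Here I use that $c\in\overline{C!}$, so every blocker $y$ of $x$ satisfies $I(y,c)<I(x,c)$. Consequently $c_j\in W_j$, and $W_j$ does not depend on which path was chosen earlier. I would also note the degenerate cases $s$ or $t$ in $C^v$: then $c$-properness forces $s$ to be in $c$ and unblocked, respectively $t=c_m$, and the same bookkeeping handles these.

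\emph{Step 2 (characterization).} I claim a test path $p$ is $c$-proper if and only if it splits as $p=q_1.q_2\cdots q_m.q_{m+1}$ with the following properties: $q_1$ starts at $s$; $q_j$ for $j\le m$ ends at $c_j$; $q_{m+1}$ ends at $t$; and the segment from the last vertex of $q_{j-1}$ through $q_j$ lies in $W_j$. For ($\Rightarrow$), cut $p$ right after $\mathrm{last}(c_j,p)$; these positions increase with $j$ by condition~\ref{def:c_proper:cond2}. The vertices in segment $j$ avoid $C^v\setminus c^v$ by condition~\ref{def:c_proper:cond1}. They avoid $c_i$ for $i<j$ because those vertices have already made their last occurrence. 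They avoid any $x$ with a blocker $y$ of index $\ge j$, because $\mathrm{last}(y,p)\ge \mathrm{last}(c_j,p)$ and condition~\ref{def:c_proper:cond3} says $x$ first appears after $\mathrm{last}(y,p)$. For ($\Leftarrow$), read these three facts backwards: $c_j\notin W_{j'}$ for $j'>j$ gives condition~\ref{def:c_proper:cond2}, and $x$ entering only after all of its blockers have left gives condition~\ref{def:c_proper:cond3}.

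\emph{Step 3 (greedy placement of $(u,v)$).} By Step 2, the feasibility of segment $j$ depends only on its endpoints $c_{j-1}$ (or $s$) and $c_j$ (or $t$) and on $W_j$. The endpoints are fixed, so the choice of walk in one segment never affects the others. Hence a $c$-proper path through $(u,v)$ exists if and only if every segment is reachable and at least one segment $k$ admits a walk through $(u,v)$. The algorithm places $(u,v)$ in the first segment where this is possible and uses plain reachability elsewhere. If it returns $\varnothing$, either some segment is unreachable or no segment admits $(u,v)$; in both cases Step 2 rules out any such $p$. Running time: there are at most $|C^v|+1$ segments, each needing $O(1)$ BFS calls plus a polynomial update of $G'$.

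The main obstacle I expect is Step 1, specifically matching the set-update formula on line~\ref{alg:maininvarian} to $W_{j+1}$. This requires showing that "$x\in\lambda_c(c_j)$ and $x$ is not in $\lambda_c$ of any remaining vertex" is exactly the condition that all blockers of $x$ have index $\le j$. It also has to account for blockers that lie in $c^v$ but are currently absent from $V(G')$ because they are themselves still blocked. I would handle this by carrying the explicit set $\{c_i:i>j\}$ alongside $W_j$ in the invariant.
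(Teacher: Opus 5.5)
Your Step~1 invariant $V(G')=W_j$ is false for the procedure as written. You correctly located the danger, but carrying $\{c_i : i>j\}$ alongside the invariant cannot repair it: the discrepancy is in what the algorithm computes, not in the bookkeeping.

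Line~\ref{alg:maininvarian} releases $x\in\lambda_c(c_j)$ as soon as $x\notin\lambda_c(V(G')\setminus\{c_j\})$. A blocker $y$ of $x$ with $I(y,c)>j$ that is itself still blocked is absent from $V(G')$, so it is simply ignored. Here is a concrete instance:
\begin{itemize}
\item Let $V=\{s,a,x,z,y,t\}$, $E=\{(s,a),(a,x),(x,z),(z,y),(y,x),(x,t)\}$, $C=\{(x,a),(x,y),(y,z)\}$, $c=(a,z,y,x)\in\overline{C!}$ and $(u,v)=(x,z)$.
\item Then $D=\{a,z\}$. In iteration~$1$ we have $y\notin V(G')$, so $\lambda_c(V(G')\setminus\{a\})=\lambda_c(z)=\{y\}$. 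Hence $x$ is released, although $x\notin W_2$.
\item Iteration~$2$ then uses $(a,x,z)$, and the procedure returns $(s,a,x,z,y,x,t)$, which violates $(x,y)$.
\item Yet no $c$-proper path exists: the only edge into $y$ comes from $z$ and the only edge into $z$ comes from $x$, so every occurrence of $y$ is preceded by $x$.
\item Indeed no admissible test path contains $(x,z)$ at all, so \textsc{HasEdgeCoverageWithNegative} answers True on a no-instance.
\end{itemize}
So the lemma fails for the literal pseudocode, and no proof of it can succeed.

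The fix belongs in the algorithm: in line~\ref{alg:maininvarian}, subtract $\lambda_c(\{c_{j+1},\ldots,c_{|c|}\})$ instead of $\lambda_c(V(G')\setminus\{c_j\})$. With that change your induction for $V(G')=W_j$ goes through, since an element of $c$ enters at step $j$ exactly when its largest-index blocker is $c_j$. Steps~2 and~3 are then correct, with one small adjustment: the start vertex $c_{j-1}$ of segment $j$ is not in $W_j$, so it must be exempted from lying in $W_j$. The same reading is needed for ``path in $G'$ from the last vertex of $p$'' in the algorithm.

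Compared with the paper: it keeps only the weaker invariant that $c_i\notin V(G')$ for $i<j$ and $c_j\in V(G')$. Its argument for Property~\ref{def:c_proper:cond3} in Claim~\ref{lemma:find_proper_path_correct:claim1} quantifies only over $y'\in V(G')$, so it has exactly the hole above. Your explicit, choice-independent $W_j$ is what exposes the bug. Once the update is corrected, it is also what makes the completeness argument (the paper's Claim~\ref{lemma:find_proper_path_correct:claim2}) and the greedy placement of $(u,v)$ rigorous.
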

\begin{proof}
We start with the high-level idea of the proof. We are looking for a path $p$ that is $c$-proper and contains $(u, v)$ as an edge. We use an auxiliary induced subgraph $G'$ of $G$. During the main loop in line \ref{alg:find_proper_path:while_start}, all possible paths in $G'$ do not violate any constraint $(x, y)$ % \textcolor{blue}{$(x,y) \in C_-$}
with both endpoints $x$ and $y$ in $c$. From the fact that $c \in \overline{C!}$, if both $x, y$ are in $c$, then $y$ must precede $x$. When we use the right endpoint $y$ for some $(x, y) \in C$ for the last time in a path $p$ we are constructing, we ``unblock'' the possibility of using, in the remaining parts of $p$, left endpoints of constraints of the form $(x, y)$, but only if $x$ is not a part of $(x, y')$ % \textcolor{blue}{$(x,y') \in C_-$}
, where $y' \in V(G')$ (see line \ref{alg:find_proper_path:while_start}). The invariant of the ``while'' loop is that at the beginning of the $j$-th iteration, $c_i \not\in V(G')$ for $j > i$ and $c_j \in V(G')$. 

Notice that $c_1 \in V(G')$ at the beginning of the first iteration, because $c_1 \in D$. Suppose $c_1 \notin D$. Then there exists $1 < j \leq |c|$ such that $(c_1, c_j) \in C$ % \textcolor{blue}{$(c_1, c_j) \in C_-$}
, which contradicts the fact that $c \in \overline{C!}$.

For $j>1$, $c_j \in V(G')$ at the beginning of the $j$-th iteration. Suppose it is not the case. Then,  we would have $c=(... c_j ... x ... y ... )$ such that $(x,c_j) \in C$ % \textcolor{blue}{$(x,c_j) \in C_-$} 
and $(c_j, y) \in C$ % \textcolor{blue}{$(c_j, y) \in C_-$}
, so $c \notin \overline{C}!$.

Finally, notice that $c_i \not\in V(G')$ for $j>i$. Suppose in $j$-th iteration we add $c_i$ to $V(G')$. But we know that $c = (\ldots, c_i, \ldots, c_j, \ldots)$, so $(c_i, c_j)$ would belong to $C$ % \textcolor{blue}{$C_-$}, since we added it in the $j$-th iteration, and we have a contradiction.

%przez caly dowod w pomocniczym podgrafie G' grafu G bedziemy pilnowac, zeby bylo podgrafem G takim, ze pojscie po jakiejkolwiek sciezce w G' nie naruszalo zadnego constrainta, ktorego oba wierzcholki sa w c. Z definicji c-proper wynika, ze jesli oba wierzcholki jakiegos constrainta (lewy i prawy) sa w c, to w c prawy koniec musi byc zawsze wczesniej niz lewy. Jak po raz ostatni uzyjemy w sciezce wierzcholka ktory jest prawym koncem jakiegos constrainta, to odblokowujemy mozliwosc uzywania w dalszej konstrukcji sciezki lewego konca wszystkich constraintow, ktore jako prawy koniec mialy ten wierzcholek, chyba ze odblokowanie to spowodobaloby naruszenie innego constainta, ktorego prawy wierzcholek jest jeszcze w G'. Niezmiennikiem jest, ze po j-tym kroku w G' nie bedzie wierzcholkow od c1 do cj i na pewno bedzie cj+1.

The proof follows from Claims \ref{lemma:find_proper_path_correct:claim1}-\ref{lemma:find_proper_path_correct:claim3}.
    \begin{claim}
        \label{lemma:find_proper_path_correct:claim1}
        If $p$ returned by \textsc{FindProperPath} is not empty then:
        
        \begin{enumerate}
        \item \label{lemma:find_proper_path_correct:claim1:cond1} $p$ starts with $s$ and ends with $t$,
        \item \label{lemma:find_proper_path_correct:claim1:cond2} $p$ contains $(u,v)$,
        \item \label{lemma:find_proper_path_correct:claim1:cond3} $p$ is $c$-proper.
        \end{enumerate}
               
    \end{claim}
    \begin{proof}
        To prove Condition \ref{lemma:find_proper_path_correct:claim1:cond1} observe that if $p \neq \varnothing$, $p$ must start with $s$ (line \ref{alg:find_proper_path:line_init_p}), since the algorithm only appends new vertices to $p$ or sets $p$ to $\varnothing$. Obviously, the algorithm ends only in lines \ref{alg:find_proper_path:invalid_end} and \ref{alg:find_proper_path:valid_end} and if it does not return $\varnothing$ then the path must end with $t$.

        To prove Condition \ref{lemma:find_proper_path_correct:claim1:cond2} it suffices to notice that if the algorithm reaches the line \ref{alg:find_proper_path:end_p_if} then two cases may occur -- either the condition in line \ref{alg:find_proper_path:end_p_if} $(u, v)$ is false, so $p$ already contains $(u,v)$ -- or $p$ does not contain $(u,v)$. In that case it will be added in line \ref{alg:adduv}, hence the claim holds.

        To prove Condition \ref{lemma:find_proper_path_correct:claim1:cond3} we show that $p$ is $c$-proper.
    
        Property \ref{def:c_proper:cond1} ($p^v \cap C^v = c^v$). To show that $p^v \cap C^v = c^v$ 
       it is easy to check the invariant of the loop in line \ref{alg:find_proper_path:while_start}, that if the algorithm did not return $\varnothing$, then $p$ does contain $c_j$ after $j$-th iteration (in particular, before the first iteration of the loop, $c_1 \in V(G')$). Similarly $G'$ does not contain any vertices of $C^v$ besides those from $c^v$.

        Property \ref{def:c_proper:cond2} ($I(v_1,c) < I(v_2,c) \Rightarrow \mathrm{last}(v_1, p) < \mathrm{last}(v_2, p)$). Let $v_1 = c_j$ and $v_2=c_{j'}$ and $j<j'$. If the algorithm did not return $\varnothing$, then $p'$ does not contain $c_j$ after $j$-th iteration, since in $j$-th iteration it was removed from $G'$ and it cannot be added to $G'$ in further iterations. On the other hand, we know that in $j'$-th iteration $c_{j'}$ will be in $p'$.
        
        Property \ref{def:c_proper:cond3} ($(x,y) \in C$ % \textcolor{blue}{$(x,y) \in C_-$}
        $\wedge x, y \in c \Rightarrow \mathrm{last}(y, p) < \mathrm{first}(x, p)$). We show that if $p \neq \varnothing$ then $p$ respects the constraints from $C$ % \textcolor{blue}{$C_-$}
        . Consider graph $G'$ in line \ref{alg:find_proper_path:line_g'_def}. Observe that any vertex $y$ that is in $G'$ either is not a part of any constraint $c \in C$ % \textcolor{blue}{$c \in C_-$}
        , or from Observation \ref{obs:left_parts_of_constraints}, for every constraint $(x,y) \in C$ % \textcolor{blue}{$(x,y) \in C_-$}
        , holds that $x \notin V(G')$. Hence, every path in $G'$ respects trivially all constraints from $C$. Assume that $p$ is respects the constraints after $j$-th iteration. Consider $G'$ after $j$-th iteration. It is easy to see that it consists of previous edges of $G'$ without $c_j$ endpoint and those edges $(x, y) \in E(G)$ for which $(x,c_j) \in C$ % \textcolor{blue}{$(x,c_j) \in C_-$}
        and $\forall y' \in V(G')\ (x,y') \notin C$ % \textcolor{blue}{$(x,y') \notin C_-$}
        (see the definition of $\lambda_c$). We can deduce that $p.p'$ computed in the next iteration cannot violate any % \textcolor{blue}{NEGATIVE} 
        constraint.
    \end{proof}
    \begin{claim}
        \label{lemma:find_proper_path_correct:claim2}
        If there exists a path $p$ between $s$ and $t$, such that $p$ is $c$-proper and $p$ contains $(u,v)$, then the procedure \textsc{FindProperPath} does not return $\varnothing$, otherwise it returns $\varnothing$.
    \end{claim}
    \begin{proof}
        Assume that $(u,v)$ first occurrence in $p$ is between last occurrence of $c_j$ and $c_{j+1}$ for $j \in \{1, \ldots, |c|\}$ (analogous argument would be sufficient if $(u,v)$ would be between $s$ and $c_1$ or $c_{|c|}$ and $t$). Consider a subpath $p'$ of $p$ between last occurrences of $c_{j'}$ and $c_{j'+1}$ for $j' < j$. For every $i < j$ there cannot be $c_i$ on $p'$ and there cannot be any $c_{i'}$ such that $(c_{i'}, c_k) \in C$ % \textcolor{blue}{$(c_{i'}, c_k) \in C_-$}
        and $k > j$. But $c_{i'}$ and $c_i$ are the vertices that are not in $G'$ in $j'$-th iteration. The same argument holds if $j' > j$ and for $p'$ between $c_j$ and $c_{j+1}$. So in all cases we can find a path $p'$ and if any of $p'$ does not exist then $p$ cannot exist so the algorithm returns $\varnothing$.
    \end{proof}

    \begin{claim} \label{lemma:find_proper_path_correct:claim3}
        The procedure \textsc{FindProperPath} runs in polynomial time.
    \end{claim}
    \begin{proof}
        Obviously $|c| \leq 2|V|$. The ``while'' loop in line \ref{alg:find_proper_path:while_start} traverses $G'$ a constant number of times, inserts/deletes polynomial number of edges in $G'$ and adds/removes polynomial number of elements to/from the set. So the loop runs in polynomial time. The same holds for operations before and after the loop, so the result holds.
    \end{proof}
    The lemma holds by Claims         \ref{lemma:find_proper_path_correct:claim1}, \ref{lemma:find_proper_path_correct:claim2}
and \ref{lemma:find_proper_path_correct:claim3}.
\end{proof}

\begin{algorithm}[!ht]
\caption{\textsc{HasEdgeCoverageWithNegative}}
\label{alg:has_edge_cov}
\begin{algorithmic}[1]
\Require cCFG $G=(V, E, s, t, C, C_-)$ % \textcolor{blue}{$G=(V, E, s, t, C_=, C_-)$}
\Ensure True if there exist an admissible $TS_G$ that achieves EC for $G$, False otherwise
\State $T = \varnothing$
        \For{$c \in \overline{C!}$} 
            \State $E' = E(G) \setminus T$
                \While{$E'$ not empty} 
                    \State take $(u,v)$ from $E'$
                    \State $p = \textsc{FindProperPath}(G, c, (u,v))$
                    \If{$p \neq \varnothing$}
                        \State $T = T \cup\{p\}$
                        \State $E' = E' \setminus E(p)$
                    \Else
                        \State $E' = E' \setminus (u,v)$
                    \EndIf
                \EndWhile
            \EndFor
            \If{$T$ achieves EC for $G$}
                \State \Return True
            \Else
                \State \Return False
            \EndIf
\end{algorithmic}
\end{algorithm}

\begin{theorem}
    The algorithm \textsc{HasEdgeCoverageWithNegative} decides, in time $f(|C|)\cdot p(|V| + |E|)$, if there exists a test suite for $G$ that admits the set of NEGATIVE constraints $C$, where  $p(\cdot)$ is a polynomial function.
\end{theorem}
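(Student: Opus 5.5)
The proof splits into correctness (soundness and completeness) and a running-time bound. Throughout I treat \textsc{FindProperPath} as a black box characterized by Lemma~\ref{lemma:find_proper_path_correct}, and use Lemma~\ref{lemma:constr_proper_equivalence}: a path respects all NEGATIVE constraints iff it is $c$-proper for some $c\in\overline{C!}$.

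\textbf{Soundness.} Suppose the algorithm returns True. Every path inserted into $T$ was produced by \textsc{FindProperPath} for some $c\in\overline{C!}$. By Lemma~\ref{lemma:find_proper_path_correct}, each such path runs from $s$ to $t$ and is $c$-proper, so it is a test path. By Lemma~\ref{lemma:constr_proper_equivalence}, it respects every constraint in $C$. NEGATIVE admissibility is a condition on each test path separately, so $T$ is admissible. The final check guarantees that $T$ achieves EC, so a valid test set exists.

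\textbf{Completeness.} Suppose some admissible $TS_G$ achieves EC, and fix an edge $e=(u,v)$. Some $q\in TS_G$ covers $e$, and $q$ respects $C$. By Lemma~\ref{lemma:constr_proper_equivalence}, $q$ is $c$-proper for some $c\in\overline{C!}$. Consider the iteration of the outer loop for this $c$. If $e$ was already covered by an earlier element of $T$ when $E'$ was formed, or was removed from $E'$ as a member of $E(p)$ for a path $p$ added to $T$, then $e$ is covered. Otherwise $e$ is eventually taken from $E'$ itself. Since $q$ witnesses the existence of a $c$-proper $s$--$t$ path through $e$, Lemma~\ref{lemma:find_proper_path_correct} says \textsc{FindProperPath} returns a nonempty path containing $e$, which is added to $T$. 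Either way $e$ is covered at the end. Since $T$ only grows, every edge is covered, the check succeeds, and the algorithm returns True.

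The step I expect to be most delicate is this bookkeeping around $E'$. An edge discarded as ``not coverable'' under one $c$ must still be retried under the $c$ that works. It is, because $E'$ is reset to $E\setminus T$ for each $c$, so an edge is dropped permanently only once it lies in $T$. I would spell this out as a loop invariant.

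\textbf{Running time.} Since $|C^v|\le 2|C|$, the number of sequences in $C!$ is
\[
\sum_{i=0}^{2|C|}\frac{(2|C|)!}{(2|C|-i)!}\le e\,(2|C|)!.
\]
Each sequence can be tested for membership in $\overline{C!}$ in time polynomial in $|C|$, so the enumeration contributes a factor $f(|C|)=O((2|C|)!\cdot\mathrm{poly}(|C|))$. For each $c$, the inner while loop removes at least one edge from $E'$ per iteration, so it runs at most $|E|$ times. Each iteration makes one call to \textsc{FindProperPath}, which is polynomial by Claim~\ref{lemma:find_proper_path_correct:claim3} (note $|c|\le 2|C|$), together with polynomial set operations. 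The final EC check is also polynomial. Multiplying gives a total of $f(|C|)\cdot p(|V|+|E|)$.
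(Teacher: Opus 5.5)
Your decomposition is the paper's: Lemma~\ref{lemma:constr_proper_equivalence} and Lemma~\ref{lemma:find_proper_path_correct} give correctness, and at most $|\overline{C!}|\cdot|E|$ calls to \textsc{FindProperPath} give the running time. You are more careful than the paper in three places. You state the soundness direction explicitly, whereas the paper only argues that every coverable edge gets covered. Your bookkeeping of the reset $E'=E\setminus E(T)$ is right. Your bound $|C!|\le e\,(2|C|)!$ is correct, whereas the paper's $|C!|=2^{|C|!}$ already fails for a single constraint $(x,y)$ with $x\neq y$, where $|C!|\ge 4$; this is harmless for the FPT claim.

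The gap sits in the black box, precisely in the half you made explicit. The soundness part of Lemma~\ref{lemma:find_proper_path_correct} (condition~3 of Claim~\ref{lemma:find_proper_path_correct:claim1}) is false for the procedure as printed. So your step ``each such path \ldots\ is $c$-proper'' fails, and the paper's proof fails with it. The cause: when $c_j$ is processed, $x\in\lambda_c(c_j)$ is re-admitted to $G'$ unless some $y'\in V(G')\setminus\{c_j\}$ has $(x,y')\in C$. But a partner $y$ with $I(y,c)>j$ can be missing from $V(G')$ at that moment simply because $y$ is itself still blocked.

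Here is a counterexample. Take edges $(s,a),(a,x),(x,z),(z,y),(y,x),(x,t)$, $C=\{(x,a),(x,y),(y,z)\}$, $c=(a,z,y,x)\in\overline{C!}$ and $(u,v)=(x,z)$. Initially $D=\{a,z\}$. After $j=1$ we get $\lambda_c(a)=\{x\}$ and $\lambda_c(V(G')\setminus\{a\})=\lambda_c(z)=\{y\}$, so $x$ is re-admitted while $y=c_3$ is still ahead. Letting each search start at the previous $c_{j-1}$ (which has just left $G'$), the procedure returns $(s,a,x,z,y,x,t)$, with $x$ before $y$. Every test path through $(x,z)$ continues $z\to y$, so this is a no-instance. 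Yet the returned path covers all six edges, and \textsc{HasEdgeCoverageWithNegative} answers True.

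The repair is to test pending partners against $c$ rather than $G'$, i.e.\ to add $\lambda_c(c_j)\setminus\lambda_c(\{c_{j+1},\ldots,c_{|c|}\})$. The graph used for the $j$-th segment is then exactly $(V\setminus C^v)\cup\{c_i : i\ge j,\ I(y,c)<j \mbox{ for all } y\in c \mbox{ with } (c_i,y)\in C\}$. A path is $c$-proper iff each stretch between the last occurrences of $c_{j-1}$ and $c_j$ (and the final stretch to $t$) stays in the corresponding graph. These graphs depend on $c$ alone, so the greedy segment-wise search is exact and your argument goes through verbatim. Your completeness paragraph is fine even without the repair, since the printed $G'$ only ever contains too many vertices.
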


\begin{proof}
From Lemma \ref{lemma:constr_proper_equivalence} we know that every $p$ that admits constraint must be $c$-proper for some $c \in \overline{C}!$. So every path from $s$ to $t$ that covers $e \in E(G)$ must also be $c$-proper for some $c$. The algorithm checks for every $c \in \overline{C}!$ if $e \in E(G)$ can be covered by some $c$-proper path, so (by Lemma \ref{lemma:find_proper_path_correct}) it is correct. To show that it runs in desired time, observe that $|\overline{C}!| \leq |C!| = 2^{|C|!}$, and the procedure \textsc{FindProperPath} is executed at most $|\overline{C}!| \cdot |E|$ times, so the result holds.
\end{proof}

\section{Conclusions and further work}

Table \ref{tab:summary} summarizes our results achieved so far. They may be considered as initial results in the field. While all (except POSITIVE) types of constraints imply NP-completeness of the corresponding edge coverage problem, the question remains if there exists an FPT (with respect to the number of constraints) algorithm for ONCE, MAX-ONCE, and ALWAYS constraints. We strongly suspect that ALWAYS can adopt a similar approach to the NEGATIVE case, whereas the ONCE and MAX-ONCE cases require a different approach. 

Observe that in this paper, we considered only the computational complexity of the edge coverage problem for cCFG with only one type of constraints. A natural extension of the research would be to permit multiple types of constraints in a CFG, as this would better model real-world systems.

\begin{table}[h!]
\caption{Summary of the computational complexity results}
\centering
\begin{tabular}{rcc}
\hline
Constraint type & P/NPC & FPT? \\
\hline
POSITIVE ($C_+$) & P  & --  \\
NEGATIVE ($C_-$) & NPC & $\checkmark$  \\
ONCE ($C_1$) & NPC & ? \\
MAX-ONCE ($C_{\leq 1}$) & NPC & ? \\
ALWAYS ($C_=$) & NPC & ? % \textcolor{blue}{$\checkmark$}
\\
\hline
\end{tabular}

\label{tab:summary}
\end{table}

\bibliography{references}

\appendix

\end{document}